\newtheorem{theorem}{Theorem}[section]
\newtheorem{lemma}{Lemma}[section]
\newtheorem{claim}{Claim}[section]
\newtheorem{proposition}{Proposition}[section]
\newtheorem{observation}{Observation}[section]
\newcommand\card[1]{\left| #1 \right|} 
\newcommand\set[1]{\left\{#1\right\}} 
\newcommand\define{\stackrel{\Delta}{=}}
\newcommand\func[1]{ \it  #1 }
\begin{document}


\title{On the Fault Tolerance and Hamiltonicity of the Optical Transpose Interconnection System of Non-Hamiltonian Base Graphs}
      
     \author{Esha Ghosh \\
     Indian Institute of Technology, \\
     Madras, INDIA\\
     eshaghosh888@gmail.com \\
     \and
     Subhas K. Ghosh\\
     Siemens Information Systems Limited \\
     Bangalore, INDIA\\
     subhas.k.ghosh@gmail.com\\
     \and
     C. Pandu Rangan \\
     Indian Institute of Technology,\\
     Madras, INDIA\\
     prangan55@gmail.com
     }
    
\date{}

\thispagestyle{empty} 
\maketitle              


\begin{abstract}
Hamiltonicity is an important property in parallel and distributed computation. 
Existence of Hamiltonian cycle allows efficient emulation of distributed algorithms on a network wherever such algorithm exists for linear-array and ring, and can ensure 
deadlock freedom in some routing algorithms in hierarchical interconnection networks. 
Hamiltonicity can also be used for construction of independent spanning tree and leads to designing fault tolerant protocols. 
Optical Transpose Interconnection Systems or OTIS (also referred to as two-level swapped network)
is a widely studied interconnection network topology which is popular due to high degree of scalability, regularity, modularity and package ability.
Surprisingly, to our knowledge, only one strong result is known regarding Hamiltonicity of OTIS - showing that OTIS graph built of Hamiltonian base graphs are Hamiltonian. 
In this work we consider Hamiltonicity of OTIS networks, built on Non-Hamiltonian base and answer some important questions. 
First, we prove that Hamiltonicity of base graph is not a necessary condition for the OTIS to be Hamiltonian. We present an infinite family of Hamiltonian OTIS graphs composed on Non-Hamiltonian base graphs.
We further show that, it is not sufficient for the base graph to have Hamiltonian path for the OTIS constructed on it to be Hamiltonian.
We give constructive proof of Hamiltonicity for a large family of Butterfly-OTIS. This proof leads to an alternate efficient algorithm for independent spanning trees construction on this class of OTIS graphs. Our algorithm is linear in the number of vertices as opposed to the generalized algorithm, which is linear in the number of edges of the graph.
\end{abstract}

\section{Introduction}
\label{SEC1}
Optical Transpose Interconnection Systems (OTIS) is a widely studied interconnection network topology in parallel and distributed computing. OTIS(Swapped) Network was first
proposed by Marsden et al. in 1993 \cite{Marsden}. A number of computer architectures have subsequently been proposed in which the OTIS concept was used to connect new optoelectronic computer architectures
efficiently exploiting both optical and electronic technologies. In this architecture, processors are divided into groups (called clusters), where processors within the same 
group are connected using electronic interconnects, while optical interconnects are used for intercluster communication. The OTIS architecture has been used to propose 
interconnection networks for multiprocessor systems. Krishnamoorthy et al. have shown that the power consumption is minimized and the bandwidth rate is maximized
when the number of processors in a cluster equals the number of clusters \cite{Krishnamoorthy}. This is the key reason why we considered OTIS graphs consisting of $n$ clusters where each cluster is isomorphic to the base graph consisting of $n$ processors.

Fault tolerance is an important aspect of parallel and distributed systems. Two major kinds of hardware faults that can occur in networks are dead processor fault (due to failure
of processor or support chip) and dead interprocessor communication (due to failure of communication hardware). These faults can be abstracted as the failure of nodes and of
edges in the underlying network graph. Hence, an important parameter to measure the fault tolerance of a distributed system, is to count the number of Independent Spanning
Trees in the graph, which ensures the presence of parallel, node-disjoint paths between nodes of the network. Hamiltonicity is an important property in any hierarchical interconnection network that is closely
related to fault tolerance, as, the presence of a number of edge disjoint Hamiltonian cycles in a network implies twice that number of Independent Spanning Trees in
that network. Hamiltonicity is also important to ensure deadlock freedom in some routing algorithms \cite{Carpenter} and to allow efficient emulation of linear-array and ring algorithms.
Algorithms, such as all-to-all broadcasting or total exchange, relies on a Hamiltonian cycle for its efficient execution \cite{Parhami2}.

\subsection{Related Results}
OTIS (Swapped) have been extensively studied. Chen et al. have shown if the base graph is $k$ connected than OTIS will have $k$-vertex disjoint paths between any pair of vertices, and
this is defined as a notion of maximal fault tolerance by them \cite{Chen}.
Surprisingly, to our knowledge, very few results are known regarding 
Hamiltonicity of OTIS networks. The only significant result known about Hamiltonicity of OTIS, is by Parhami, that proves that OTIS networks built of Hamiltonian basis networks are Hamiltonian \cite{Parhami}.
The result by Hoseinyfarahabady et al. \cite{Hosein} shows that the OTIS-Network is Pancyclic and hence Hamiltonian,  if its base network is Hamiltonian-connected. However, by the fact that any Hamiltonian connected base graph is definitely Hamiltonian, this is a weaker result.

%

\subsection{Our Contribution}

We address some important aspect of Hamiltonicity on OTIS graphs. 
\begin{itemize}
\item{We investigate whether Hamiltonicity of base graph is also a necessary condition for the OTIS to be Hamiltonian.
We answer this in negative. }
\item{We further investigate whether it is sufficient for the base graph to have Hamiltonian path, for the OTIS to be Hamiltonian. We answer this in negative as well.}
\item{Two kinds of butterfly graphs known in literature. The first one is a 5-vertex graph (Fig 1) which is also known as bowtie graph. We consider the generalization of this butterfly/bowtie graph, 
where we consider two cycles $C_m, C_n$ connected at a cutvertex and denote in as $BF(n,m)$ . We consider $BF(n,m)$ as base network, and investigate the Hamiltonicity on the OTIS network.  To avoid ambiguity, we denote the OTIS network formed on  $BF(n,m)$ as Bowtie-OTIS.}
\item{A different type of butterfly graph of dimension $n$ is defined as a 4-regular graph, $BF(n)$ , on $n2^n$ vertices as follows \cite{Barth}:
\begin{itemize}
\item{The vertex set, $V(BF(n))$ is the set of couples $ (\alpha; x_{n-1}, \ldots, x_0)$, where $ \alpha \in \lbrace 0, \ldots, (n-1) \rbrace$ and $ x_i \in \lbrace 0,1 \rbrace, \forall i \in \lbrace 0, \ldots, (n-1) \rbrace$.}
\item{$ [(\alpha; x_{n-1}, \ldots, x_0), ({\alpha}^{\prime}; {x_{n-1}}^{\prime}, \ldots, {x_0}^{\prime})]$ is an edge of $BF(n)$ if $ \alpha^{\prime} \equiv \alpha+1 ($mod $n)$ and if $x_i = {x_i}^{\prime}  \forall i \neq \alpha^{\prime}$. }

\end{itemize}
We also investigate Hamiltonicity on the OTIS network built on this base network.}
\item{We give constructive proofs for Hamiltonicity, on Bowtie-OTIS of $BF(2m+1,2n+1)$ and $BF(2m+1,2k)$, where $ m, n, k \in \mathbb{N} $.
We also prove that number of edge-disjoint Hamiltonian Cycles possible on this class of Bowtie-OTIS is at most one. This construction leads to an efficient alternate linear time
Independent Spanning Tree construction algorithm on this class of Bowtie-OTIS graphs. This algorithm is linear in the number of vertices, as opposed to the generalized tree construction algorithm was proposed by Itai and Rodeh \cite{Itai}, 
which is linear in number of edges of the graph. So if we make $BF(2m+1,2n+1)$ or $BF(2m+1,2k)$ denser by introducing chords
inside the cycles $C_{2m+1}$, $C_{2n+1}$ or $C_{2k}$ such that at least one of the vertices retain degree 2, our algorithm shows better performance than the generalized one.}

 \end{itemize}

\begin{figure}[htpb]
   \begin{center}
     \resizebox{50mm}{!} {\includegraphics[scale=.2]{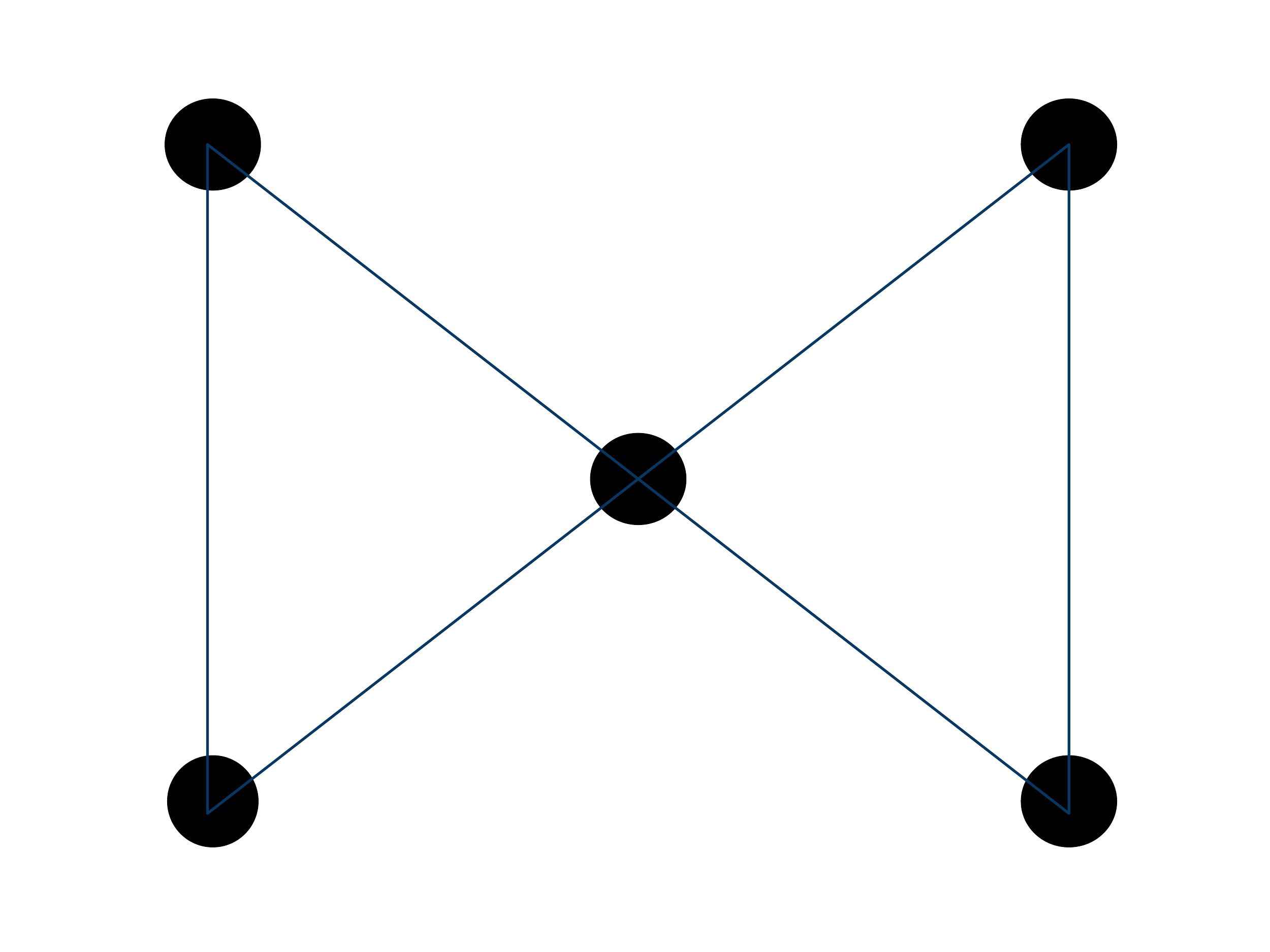}}
     \resizebox{50mm}{!} {\includegraphics[scale=.25]{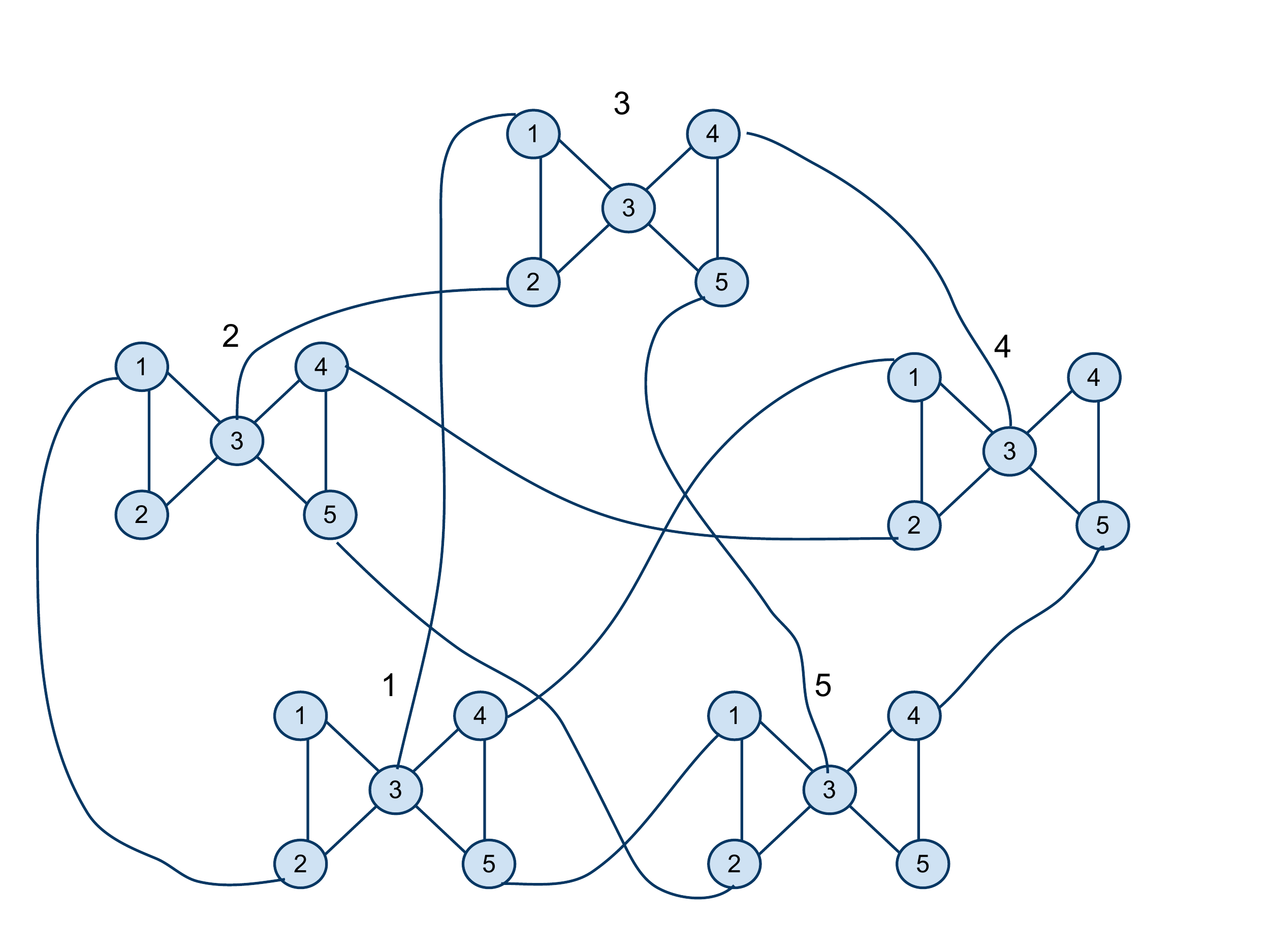}}
     \caption{ (a) Butterfly or Bowtie Graph $BF(3,3)$ (b) OTIS on $BF(3,3)$} 

   \end{center}
 \end{figure}
  
%
%

\subsection{Organization}
We have organized the paper in eight sections. In section 2, we give some preliminaries, in section 3, we give a brief outline of our work,
in section 4, we discuss the proof for Hamiltonicity on $OTIS(BF(2m+1,2n+1))$ and $OTIS(BF(2m+1,2k))$ thereby, proving that Hamiltonicity of base graph is not a necessary condition 
for the OTIS to be Hamiltonian. In section 5 we prove that  $OTIS(BF(4,4))$ and $OTIS(BF(4,6))$ are non-Hamiltonian, which proves that it 
is not sufficient for the base graph to have Hamiltonian path, for the OTIS to be Hamiltonian. In section 6, we show that OTIS network built
on the other class of Butterfly graphs, \cite{Barth} is Hamiltonian. We discuss our algorithm to create two Independent
Spanning Trees in time linear in the number of vertices in section 7. Finally, we conclude in section 8 mentioning some interesting open directions that need further exploration. 
%
%

\section{Preliminaries}
We will use standard graph theoretic terminology. Let $G = (V,E)$ be a finite undirected simple graph with vertex set $V(G)$ and edge set $E(G)$. 
For a vertex $v \in V(G)$, by $\func{deg}(v)$ we shall denote the degree of $v$ in $G$. The maximum degree among the vertices of $G$ is denoted by $\Delta(G)$ and the minimum degree by $\delta(G)$.
$\func{diam}(G)$ denote the diameter of $G$ and it is defined as the maximal distance between any two nodes in $G$. The connectivity of $G$, $\kappa(G)$ denotes the minimum number of vertices, which when removed, disconnects $G$.
The OTIS network denoted as $OTIS(G)$, derived from the base or basis or factor graph $G_B$, is a graph with vertex set:
\begin{align}
V(OTIS(G_B)) \define \set{\left\langle u,v\right\rangle | u,v \in V(G_B)},\notag
\end{align}
And edge set:
\begin{align}
E(OTIS(G_B)) \define \set{(\left\langle v,u\right\rangle, \left\langle v,u'\right\rangle) | v \in V(G_B), (u,u') \in E(G_B)} \cup \notag\\
\set{(\left\langle v,u\right\rangle, \left\langle u,v\right\rangle) | u, v \in V(G_B), u \neq v}.\notag
\end{align}
If the basis network $G_B$ has $n$ nodes, then $OTIS(G)$ is composed of $n$ node-disjoint subnetworks called clusters, each of which is isomorphic to $G_B$. We assume that the processor/nodes of the basis network is labeled $[n] = \set{1, \ldots, n}$, and the processor/node label $\langle g,u\rangle$ in $OTIS$ network $OTIS(G)$ identifies the node indexed $u$ in cluster $g$, and this corresponds to vertex $\langle g,u\rangle \in V(OTIS(G_B))$. Subsequently, we shall refer to $g$ as the cluster address of node $\langle g,u\rangle$ and $u$ as its processor address.\par

The vertices of the base graph $BF(m,n)$, [$m,n \in \mathbb{N}$]  of $OTIS(BF(m,n))$, is labeled with indices $\lbrace 1, 2, \ldots, c, c+1, \ldots, i \rbrace \subset \mathbb{N} $,
where $c$ denotes the label of the cutvertex, and $i$ denotes the label of the last vertex in the base graph and hence, $i= |V(G_B)|, m=c, n=i-c+1$. (Figure 2)
 
\begin{figure}[ht]
\begin{center}
\includegraphics[scale=.2]{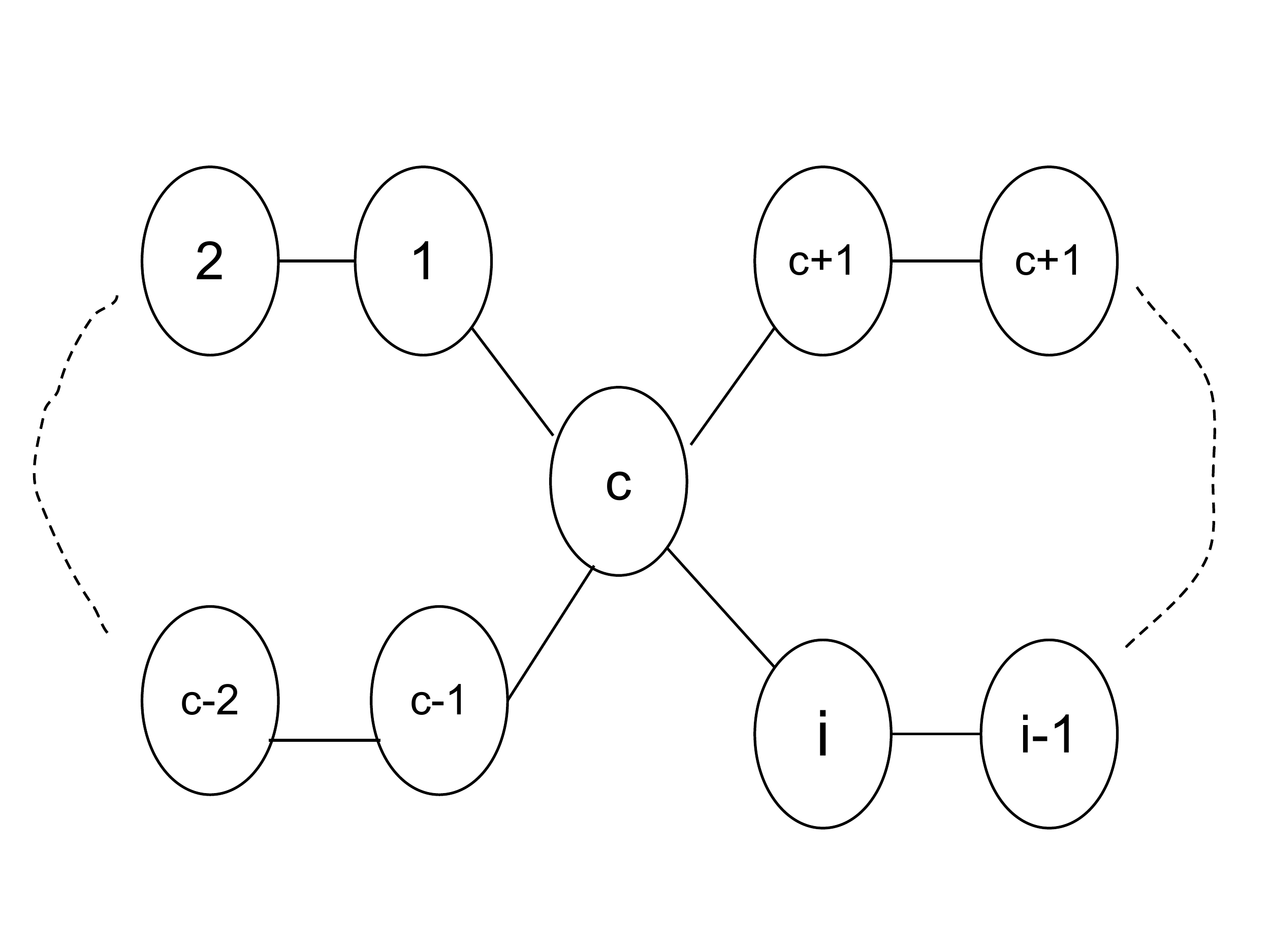} 
     \caption{Labeling $BF(m,n)$, where $i= |V(G_B)|, m=c, n=i-c+1$}

   \end{center}
 \end{figure}

To enhance readability, sometimes we will
mention a cluster $g$ and denote edges for which, both endpoints are within $g$, as $(x,y)$ which denote edges $(\langle g, x\rangle , \langle g, y \rangle)$.
A vertex is called "saturated" if its Hamiltonian neighbours, i.e, neighbours in a Hamiltonian Cycle, are explicitly identified.  

Based on the existing results following properties hold for $OTIS(G)$:
\begin{proposition}[\cite{Chen}]
\label{PROP1:SEC2}
Given basis graph $G = (V,E)$, with $\card{V} = n$, $\Delta(G) = \Delta$, $\delta(G) = \delta$, $\func{diam}(G) = d$,  and $\kappa(G) = k$, following holds for $OTIS(G)$:
\begin{enumerate}
\item	$\deg(\langle u,v\rangle) = \deg(v) + 1$ when $u \neq v$, and $\deg(v)$ otherwise.
\item	$\Delta(OTIS(G)) = \Delta + 1$.
\item	$\delta(OTIS(G)) = \delta$.
\item $\func{diam}(OTIS(G)) = 2d + 1$.
\end{enumerate}
\end{proposition}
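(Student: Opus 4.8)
The plan is to separate the elementary degree statements (items 1--3) from the diameter formula (item 4): the first three fall out of a direct count over the edge set of $OTIS(G_B)$, whereas item 4 needs a routing construction for the upper bound and a separate argument for the lower bound.

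For item 1 I would fix a vertex $\langle u,v\rangle$ and split its incident edges according to the two clauses defining $E(OTIS(G_B))$. The intra-cluster clause contributes exactly one neighbour $\langle u,v'\rangle$ for every edge $(v,v')\in E(G_B)$, hence $\deg(v)$ neighbours, while the transpose clause contributes the single neighbour $\langle v,u\rangle$ and only when $u\neq v$. This gives $\deg(\langle u,v\rangle)=\deg(v)+1$ for $u\neq v$ and $\deg(v)$ otherwise. Items 2 and 3 are then read off from item 1. For the maximum, any vertex satisfies $\deg(\langle u,v\rangle)\leq\deg(v)+1\leq\Delta+1$, and the bound is attained by choosing $v$ with $\deg(v)=\Delta$ and any cluster $u\neq v$ (which exists since $n\geq 2$); hence $\Delta(OTIS(G))=\Delta+1$. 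For the minimum, every vertex has degree at least $\deg(v)\geq\delta$, and the value $\delta$ is attained by the diagonal vertex $\langle v,v\rangle$ with $\deg(v)=\delta$; hence $\delta(OTIS(G))=\delta$.

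The substantive part is item 4. For the upper bound I would use the canonical OTIS route: to join $\langle g_1,p_1\rangle$ and $\langle g_2,p_2\rangle$ with $g_1\neq g_2$, walk inside cluster $g_1$ from processor $p_1$ to processor $g_2$, traverse the transpose edge to $\langle g_2,g_1\rangle$, and walk inside cluster $g_2$ from processor $g_1$ to $p_2$; each intra-cluster walk has length at most $d=\func{diam}(G_B)$ and the transpose costs $1$, so the distance is at most $2d+1$ (and at most $d$ when $g_1=g_2$). The \emph{main obstacle} is the matching lower bound, which I would obtain through a potential argument on a diametral pair. Choose $a,b\in V(G_B)$ with $\func{dist}(a,b)=d$ and set $\Phi(\langle g,p\rangle)=\func{dist}(g,b)+\func{dist}(p,b)$, the distances being measured in $G_B$. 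Because $\Phi$ is symmetric in its two coordinates it is unchanged across every transpose edge, and because an intra-cluster edge moves only one coordinate to an adjacent vertex of $G_B$ it changes $\Phi$ by at most $1$. Since $\Phi(\langle a,a\rangle)=2d$ and $\Phi(\langle b,b\rangle)=0$, every path between these two vertices must use at least $2d$ intra-cluster edges; and as $a\neq b$ puts the endpoints in distinct clusters, it must also use at least one transpose edge, for a total length of at least $2d+1$. Together with the upper bound this forces $\func{dist}(\langle a,a\rangle,\langle b,b\rangle)=2d+1$, and since no pair exceeds $2d+1$ the diameter equals $2d+1$. The one assumption I would flag explicitly is that $G_B$ is connected with $n\geq 2$ vertices, so that $d\geq 1$ and the required transpose and cross-cluster moves exist.
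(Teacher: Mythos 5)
Your proof is correct, but note that there is nothing in the paper to compare it against: the paper states this proposition as a result imported from Chen et al.\ \cite{Chen} and gives no proof whatsoever, so any argument you supply is necessarily your own. Items 1--3 of your write-up are the routine double count over the two clauses of $E(OTIS(G_B))$ (intra-cluster edges contribute $\deg(v)$, the transpose edge contributes $1$ exactly when $u\neq v$), and your attainment arguments for $\Delta+1$ and $\delta$ are right; the only real content is item 4. There your upper bound is the canonical OTIS route (intra-cluster walk to processor $g_2$, transpose, intra-cluster walk to $p_2$), which is the standard argument in the swapped-network literature, while your lower bound via the potential $\Phi(\langle g,p\rangle)=\func{dist}(g,b)+\func{dist}(p,b)$ is a genuinely clean device: invariance of $\Phi$ under transpose edges plus the unit-Lipschitz property under intra-cluster edges immediately forces at least $2d$ intra-cluster edges on any $\langle a,a\rangle$--$\langle b,b\rangle$ path, and the cluster-coordinate change forces at least one transpose edge, giving $2d+1$ without any case analysis on shortest paths. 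This is tighter and more self-contained than what one typically finds (arguing over the three cases of how a shortest path alternates between clusters). Your closing caveat is also substantively correct and worth keeping: the formula $\func{diam}(OTIS(G))=2d+1$ genuinely requires $G_B$ connected with $n\geq 2$ (for $n=1$ the OTIS graph is a single vertex and the formula fails), a hypothesis the paper never states explicitly.
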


%
%

\section{Outline of the Work}
We first investigate the Hamiltonicity of $OTIS(BF(2m+1,2n+1))$ and $OTIS(BF(2m+1,2k))$, where $ m, n, k \in \mathbb{N} $ and prove that both of them are Hamiltonian.
We give explicit constructions of Hamiltonian Cycles on these two classes. Thus we answer the question that, the base graph need not be Hamiltonian, for the OTIS-network to be Hamiltonian, as 
the generalized bowtie graphs, $BF(2m+1,2n+1)$ and $BF(2m+1,2k)$ are clearly Non-Hamiltonian. 

\begin{lemma}
Number of edge-disjoint Hamiltonian Cycles on a simple graph with minimum degree $\delta$ is at most $\left \lfloor{\frac{\delta}{2}}\right \rfloor$.
\end{lemma}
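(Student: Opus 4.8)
The plan is to bound the number of edge-disjoint Hamiltonian cycles by a simple counting argument based at a single vertex. Each Hamiltonian cycle is a spanning $2$-regular subgraph, so every vertex of the graph has exactly two of its incident edges used by any given Hamiltonian cycle. The crucial observation is that if two Hamiltonian cycles are \emph{edge-disjoint}, then the pairs of edges they contribute at a fixed vertex $v$ must themselves be disjoint.

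First I would fix a vertex $v$ of minimum degree, so that $\func{deg}(v) = \delta$. Suppose the graph admits $t$ pairwise edge-disjoint Hamiltonian cycles $H_1, H_2, \ldots, H_t$. Each $H_j$, being spanning and $2$-regular, uses exactly two edges incident to $v$. Because the cycles are pairwise edge-disjoint, the edge sets $\{H_j \cap E(v)\}$ they use at $v$ are mutually disjoint two-element subsets of the $\delta$ edges incident to $v$, where $E(v)$ denotes the set of edges incident to $v$. Therefore the total count $2t$ of edge-endpoints used at $v$ cannot exceed $\delta$, giving $2t \le \delta$, and hence $t \le \left\lfloor \frac{\delta}{2} \right\rfloor$ since $t$ is an integer.

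The only step requiring a moment of care is verifying the two claims about a single Hamiltonian cycle: that it contributes exactly two edges at $v$ (immediate from $2$-regularity of a spanning cycle), and that edge-disjointness of the cycles globally forces disjointness of their contributions \emph{locally} at $v$ (immediate, since any shared edge at $v$ would be a shared edge of the two cycles). Neither presents a genuine obstacle; the argument is essentially a pigeonhole count on the $\delta$ edges at $v$. I would present it in two or three sentences, emphasizing that the floor arises purely because $t$ must be an integer satisfying $2t \le \delta$.
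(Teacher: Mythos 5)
Your proof is correct and is essentially the same argument as the paper's: a pigeonhole count at a single vertex, using the fact that each Hamiltonian cycle consumes exactly two incident edges and edge-disjointness forces these pairs to be disjoint, giving $2t \le \delta$ and hence $t \le \left\lfloor \frac{\delta}{2} \right\rfloor$. Your version is slightly cleaner in that it explicitly fixes a vertex of minimum degree and justifies why global edge-disjointness yields local disjointness at that vertex, but there is no substantive difference in approach.
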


\begin{proof}
 Any vertex $v$ has $deg(v)$ number of edges incident on it. If possible, let there be  $H_i$ number of edge-disjoint Hamiltonian cycles on the graph. 
Each Hamiltonian Cycle will use exactly two of the $deg(v)$ edges incident on vertex $v$. Hence, $v$ can be included in at most $\frac{deg(v)}{2}$ Hamiltonian Cycles,
if $deg(v)$ is even, $\frac{deg(v)-1}{2}$ Hamiltonian Cycles, if $deg(v)$ is odd. Hence it is easily seen that $H_i$ is upperbounded by $\left \lfloor{\frac{\delta}{2}}\right \rfloor$.
\end{proof}

\noindent{The crucial observation that will be exploited for Hamiltonian Cycle construction on $OTIS(BF(2m+1,2n+1))$ and $OTIS(BF(2m+1,2k))$ is the following:}\\

\begin{observation}{There are only 4 kinds of vertex-degrees in the Bowtie-OTIS, $2, 3, 4, 5$ and the Bowtie-OTIS is 2-edge connected. Also there is exactly one vertex of degree $4$, namely $\langle c,c \rangle$, exactly
$(|V_B|-1)$ vertices of degree $2$ ($\langle x,x \rangle$ where $x \neq c$) and $(|V_B|-1)$ vertices of degree 5 ($\langle x,c \rangle$ where $\forall x \in (\lbrace 1, 2, \ldots, |V_B| \rbrace \backslash c)$) .  
Rest of the vertices are all of degree $3$.} 
\end{observation}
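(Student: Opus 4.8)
The degree statements are immediate from Proposition~\ref{PROP1:SEC2} together with the degree sequence of the base graph, so the plan is to dispatch them first and then concentrate on the connectivity claim, which is the only part requiring real work.

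First I would record the base-graph degrees. Since $BF(m,n)$ is obtained by identifying one vertex of the cycle $C_m$ with one vertex of the cycle $C_n$, the cutvertex $c$ lies on both cycles and hence has $\deg(c)=4$ in $G_B$, while every other vertex lies on exactly one cycle and has degree $2$. Proposition~\ref{PROP1:SEC2} states $\deg(\langle u,v\rangle)=\deg(v)+1$ for $u\neq v$ and $\deg(\langle u,v\rangle)=\deg(v)$ for $u=v$, where the right-hand degrees are taken in $G_B$. Partitioning the $|V_B|^2$ vertices of the OTIS by their processor address $v$ and by whether $u=v$ then yields exactly four cases: $\langle c,c\rangle$ has degree $\deg(c)=4$; each diagonal vertex $\langle x,x\rangle$ with $x\neq c$ has degree $\deg(x)=2$; each off-diagonal vertex $\langle x,c\rangle$ with $x\neq c$ has degree $\deg(c)+1=5$; and every remaining vertex $\langle x,y\rangle$ with $x\neq y$ and $y\neq c$ has degree $\deg(y)+1=3$. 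This shows the only occurring degrees are $2,3,4,5$, pins down $\langle c,c\rangle$ as the unique degree-$4$ vertex, and, by a direct count, gives $|V_B|-1$ vertices of degree $2$, $|V_B|-1$ of degree $5$, and the remaining $(|V_B|-1)^2$ of degree $3$.

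The substantive part is $2$-edge connectivity, which I would establish by showing the graph is connected and has no bridge; equivalently, that every edge lies on a cycle. Connectivity is easy: each cluster is isomorphic to the connected graph $BF(m,n)$, and any two clusters $g\neq h$ are joined by the transpose edge $(\langle g,h\rangle,\langle h,g\rangle)$, so the whole graph is connected. For an intra-cluster edge, i.e. a copy of a base edge $(u,u')$ inside some cluster $g$, I would note that in $BF(m,n)$ every edge lies on one of the two cycles; lifting that cycle into cluster $g$ exhibits a cycle through the edge, so no intra-cluster edge is a bridge.

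The main obstacle, and the only place needing a genuine construction, is the transpose edges. For $e=(\langle v,u\rangle,\langle u,v\rangle)$ with $u\neq v$, I would exhibit a walk between its endpoints avoiding $e$, which suffices to rule out a bridge. Using that $v$ has degree at least $2$ in $G_B$, pick a base-neighbour $w$ of $v$ with $w\neq u$; then follow the intra-cluster edge $\langle u,v\rangle\to\langle u,w\rangle$, the transpose edge $\langle u,w\rangle\to\langle w,u\rangle$, a path inside cluster $w$ from $\langle w,u\rangle$ to $\langle w,v\rangle$ (lifting any base path from $u$ to $v$), the transpose edge $\langle w,v\rangle\to\langle v,w\rangle$, and finally a path inside cluster $v$ from $\langle v,w\rangle$ to $\langle v,u\rangle$. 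Since $w\neq u$ and $w\neq v$, the two transpose edges used differ from $e$, and all remaining steps are intra-cluster, so $e$ is avoided; hence $e$ lies on a cycle. Combining the three cases, every edge lies on a cycle and the Bowtie-OTIS is $2$-edge connected. The only delicate point to check carefully is that the neighbour $w$ with $w\neq u$ always exists, which is guaranteed by $\delta(BF(m,n))=2$.
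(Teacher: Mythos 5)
Your degree analysis is exactly the paper's argument: the paper's entire proof of this observation is the single sentence that it ``follows from Proposition 3.1,'' i.e.\ from the degree formula $\deg(\langle u,v\rangle)=\deg(v)+1$ for $u\neq v$ and $\deg(\langle u,v\rangle)=\deg(v)$ otherwise, combined with the fact that the cutvertex $c$ has degree $4$ in $BF(m,n)$ while every other base vertex has degree $2$; your four-case partition and the counts $1$, $|V_B|-1$, $|V_B|-1$, $(|V_B|-1)^2$ are precisely what that sentence compresses. Where you genuinely differ is the $2$-edge-connectivity claim: the paper offers no argument for it at all, and it does \emph{not} follow from the cited proposition, which speaks only of degrees and diameter --- so this is a real gap in the paper that your proposal closes. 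Your argument is sound: connectivity comes from the transpose edges joining clusters; an intra-cluster edge is a non-bridge because every edge of $BF(m,n)$ lies on one of the two base cycles and that cycle lifts into the cluster; and a transpose edge $(\langle v,u\rangle,\langle u,v\rangle)$ is a non-bridge because of your detour through a third cluster $w\notin\{u,v\}$, whose existence is guaranteed by $\delta(BF(m,n))=2$, with both transpose edges on the detour distinct from the edge being avoided. It is worth noting that your detour uses nothing about bowties beyond connectivity and bridgelessness of the base, so it in fact proves the more general statement that $OTIS(G_B)$ is $2$-edge connected whenever $G_B$ is connected and bridgeless --- a cleaner fact than the observation itself, and one the paper implicitly relies on but never establishes.
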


The correctness of this observation follows from Proposition 3.1.\\


%
%
%

Using Lemma 3.1 and Observation 3.1, it is easily seen the number of edge-disjoint Hamiltonian Cycles on $OTIS(BF(2m+1,2n+1))$ and $OTIS(BF(2m+1,2k))$ can be at most $\left \lfloor{\frac{2}{2}}\right \rfloor = 1$.
We give construction for Hamiltonian Cycle and discuss how these constructions can be used to generate two Independent Spanning Trees on
$OTIS(BF(2m+1,2n+1))$ and $OTIS(BF(2m+1,2k))$ in time linear in the number of vertices of the OTIS-network (in section 7).\par

Next we address the question whether it is sufficient for the base graph to have Hamiltonian path, for the OTIS to be Hamiltonian. We answer this in negative, proving that the $OTIS(BF(4,6))$ and $OTIS(BF(4,4))$
are both Non-Hamiltonian. It is easy to see that the base graph, in both the cases, admits Hamiltonian Path. \par

Lastly, we consider the the OTIS network built of butterfly graph mentioned in \cite{Barth}. 

\section{Proof of Hamiltonicity of $OTIS(BF(2m+1,2n+1))$ and $OTIS(BF(2m+1,2k))$}

We give constructive proofs for both $OTIS(BF(2m+1,2n+1))$ and $OTIS(BF(2m+1,2k))$. First we state two Inference Rules that will be used to construct Hamiltonian Cycles.

\begin{itemize}
 \item[IR 1:] {If a vertex of degree $\geq3$, gets saturated, the rest of its edges, not used in the saturation, becomes Non-Hamiltonian edges and are deleted from the graph.}
 \item[IR 2:]{If $(\langle g_1, u \rangle, \langle g_2, v \rangle)$ is an edge between the vertices  $\langle g_1, u \rangle$ and $\langle g_2, v \rangle$, both of degree
$3$, and  if the edge $ (\langle g_1, u \rangle, \langle g_2, v \rangle)$ is identified as Non-Hamiltonian, then all other edges incident to the vertices  $\langle g_1, u \rangle$ and $\langle g_2, v \rangle$
are forced to be Hamiltonian.\\[.2cm]
Once the edge $ (\langle g_1, u \rangle, \langle g_2, v \rangle)$ is identified as Non-Hamiltonian, it is dropped from the potential set of edges required to construct
Hamiltonian cycle. So now, exactly 2 potential edges are incident to each $ \langle g_1, u \rangle$ and $\langle g_2, v \rangle$ and hence are forced to be Hamiltonian edges.}

\end{itemize}

The steps in the construction are as follows:\\

\begin{itemize}

\item[Step 1:]{We identify the key Non-Hamiltonian edges whose endpoints lies within the same cluster, explicitly and delete them.}
\item[Step 2:]{In this process some vertices becomes saturated; we apply IR 1 on these vertices.}
\item[Step 3:]{The previous step, in turn decides Hamiltonian edges of the remaining vertices(due to IR 2).}  
 
\end{itemize}

\begin{observation}
The constructions can be implemented as algorithm to construct Hamiltonian Cycles on $OTIS(BF(2m+1,2n+1))$ and $(OTIS(BF(2m+1,2k))$ in time $O(m|V_B|)$, i.e., 
in time linear in the number of vertices of the OTIS graph. [$|V_B| = $ number of clusters].  
\end{observation}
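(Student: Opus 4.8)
The plan is to turn the three-step construction (Step 1--Step 3, driven by the inference rules IR 1 and IR 2) into an explicit deterministic algorithm and then bound the work it performs; since the Hamiltonicity proof itself already certifies that the output is a Hamiltonian cycle, only the running time is at issue here. Note first that, by the definition of $OTIS(G_B)$, both $OTIS(BF(2m+1,2n+1))$ and $OTIS(BF(2m+1,2k))$ have exactly $|V_B|^2$ vertices, and by Observation 3.1 every degree lies in $\{2,3,4,5\}$, so there are only $O(|V_B|^2)$ edges. I would fix a representation that stores each cluster's adjacency list together with, for every vertex, a Boolean ``saturated'' flag and, for every edge, a three-valued label in $\{\text{undecided},\text{Hamiltonian},\text{non-Hamiltonian}\}$. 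Because $BF$ is just two cycles sharing the cutvertex, the base graph has $O(|V_B|)$ edges, so the entire $OTIS$ adjacency structure and all flags can be initialised directly from the definition in $O(|V_B|^2)$ time, with no search.

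Next I would implement Step 1. The construction pins down the ``key'' intra-cluster non-Hamiltonian edges explicitly, by a fixed combinatorial rule on the processor indices, so for each of the $|V_B|$ clusters these edges can be located and relabelled non-Hamiltonian directly, in $O(m)$ time per cluster, giving $O(m|V_B|)$ in total; no edge is touched more than once in this phase.

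Steps 2 and 3 form the propagation phase, and this is where I would be most careful. I would drive IR 1 and IR 2 with a worklist: whenever an edge's label changes or a vertex first becomes saturated, the affected endpoints are pushed; on popping a vertex I scan its at most five incident edges, apply IR 1 to delete the unused edges of a freshly saturated vertex and IR 2 to force the two surviving edges at a degree-$3$ vertex once one of its edges has been declared non-Hamiltonian, and push any newly affected neighbour. The accounting point is that each vertex is saturated at most once and each edge has its label fixed at most once; combined with the maximum-degree bound of $5$ from Observation 3.1, every edge is inspected only $O(1)$ times over the whole run, so the propagation costs $O(|V_B|^2)$. Summing the three phases gives a total of $O(|V_B|^2)$, which is exactly linear in $\card{V(OTIS(G_B))}$, and under the parameter identification $|V_B|=\Theta(m)$ of the balanced construction this is the stated $O(m|V_B|)$.

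The main obstacle is not the per-step cost but justifying that the forcing process is \emph{confluent and self-terminating}: I must argue that applying IR 1 and IR 2 never requires revisiting an already-decided edge or ``un-saturating'' a vertex, so that no backtracking occurs and the ``$O(1)$ touches per edge'' bound is genuine rather than optimistic. This is precisely the determinism guaranteed by the preceding Hamiltonicity proof, which shows the saturation decisions are \emph{forced} rather than searched; I would invoke that to conclude the worklist empties after a single forward sweep. Correctness of the output is inherited from the construction, and what this argument supplies is only the linear time bound.
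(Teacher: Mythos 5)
Your proposal is correct, and its core accounting coincides with the paper's own: Step 1 deletes $O(m)$ explicitly indexed intracluster edges in each of the $|V_B|$ clusters (taking $m>n$ and $m>k$ without loss of generality), which gives $O(m|V_B|)=O(|V_B|^2)$, linear in the number of vertices of the OTIS graph. Where you go beyond the paper is in the treatment of Steps 2 and 3: the paper's justification charges essentially all the work to Step 1 and says nothing about the cost of repeatedly applying IR 1 and IR 2, whereas you bound that propagation phase explicitly with a worklist and an amortization argument (each edge's label is fixed at most once, each vertex is saturated at most once, and degrees are at most $5$ by Observation 3.1, so every edge is touched only $O(1)$ times). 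Your remark that the single forward sweep is licensed by the forcing being monotone and conflict-free --- which is exactly what the correctness argument (Claim 4.1) establishes, since the inferences there are forced rather than searched --- makes explicit the point the paper leaves implicit. One minor imprecision: you attribute $|V_B|=\Theta(m)$ to the ``balanced construction,'' but it holds for all the graphs in question once $m>n$, $m>k$ is assumed without loss of generality, since then $|V_B|=2m+2n+1\leq 4m+1$; this is the assumption the paper itself makes. In short, same decomposition and the same final bound, but your version closes the gap in the paper's two-sentence proof concerning the propagation cost and the data-structure initialization.
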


This observation follows from the fact that, the number of intracluster edges, deleted per cluster, in these constructions, is of $O(m)$, assuming $m>n, m>k$, without any loss of generality.
In Step 1 of the construction, non-Hamiltonian intracluster edges are explicitly identified for all the clusters. Therefore, this step takes time, proportional to 
the number of clusters, i.e., $O(|V_B|)$. Hence the Hamiltonian Cycle construction takes time $O(m|V_B|)$, i.e., in time linear in the number of vertices in the base graph.

\subsection{Hamiltonicity of $OTIS(BF(2m+1,2n+1))$}
We identify the key non-Hamiltonian edges (Step 1 of the construction) in three parts.
First we identify the key Non-Hamiltonian edges for $OTIS(BF(3,2n+1))$, $n>1$ . Then identify the key Non-Hamiltonian edges for $OTIS(BF(2m+1,2m+1))$, $m>1$. \footnote{We give explicit construction for $m=1$} 
Lastly, we identify the key Non-Hamiltonian edges for any $OTIS(BF(2m+1,2n+1))$, where $m >1$, $n>3$ and $n>m$.\footnote{We give explicit construction for $m=2, n=3$} This completes the proof that any $OTIS(BF(2m+1,2n+1))$ [$m,n \in \mathbb{N}$] is Hamiltonian..
Note that, in these computations, the label $0$ is same as label $c$.

Here we show the construction of $OTIS(BF(2m+1,2m+1))$, $m>1$ and argue its correctness. The constructions for $OTIS(BF(3,2n+1))$, $n>1$ and $OTIS(BF(2m+1,2n+1))$, where $m >1$, $n>3$ and $n>m$
and for $OTIS(BF(3,3))$ and $OTIS(BF(5,7))$ is given in appendix A.


\subsubsection{Key non-Hamiltonian edges for $OTIS(BF(2m+1,2m+1))$, $m>1$.}
We determine the key non-Hamiltonian intracluster edges for each cluster. 
\begin{description}
\item[Cluster 1:]
The sets $S_1= \lbrace (2,3), (4,5), \ldots, (c,c-1) \rbrace$, $S_2 =\lbrace (c+2,c+3), (c+4,c+5), \ldots, (i-2,i-1)\rbrace$ and $(c,i), (c,c-1)$ and $(c,c+1)$.
\item[Cluster $(c+1)$:]
The sets $S_1= \lbrace (2,3), (4,5), \ldots, (c,c-1) \rbrace$, $S_2 =\lbrace (c+2,c+3), (c+4,c+5), \ldots, (i-2,i-1)\rbrace$ and $(c,i), (c,c-1) $and $(c,1)$.
\item[Cluster $(c-1)$:]
The set $S_3= \lbrace (2,3), (4,5), \ldots, (c-3,c-2) \rbrace$, $(c,c+1), (c,1)$ and $(c+3,c+4)$ iff $(c+4) \neq i$.
\item[Cluster $(c-2)$:]
The set $S_4= \lbrace (1,2), (3,4), \ldots, (c-1,c) \rbrace$, $(c,c+1)$ and $(c+3,c+4)$ iff $(c+4) \neq i$.
\item[Cluster $i$:]
The set $S_5= \lbrace (c+2,c+3), \ldots, (c-3,c-2) \rbrace$, $(c,c+1), (c,1)$ and $(3,4)$ iff $(c-1) \neq 4$.
\item[Cluster $(i-1)$:]
The set $S_6= \lbrace (c+1,c+2), \ldots, (c,i) \rbrace$, $(c,1)$ and $(3,4)$ iff $(c-1) \neq 4$.
\end{description}

For Clusters $\lbrace 2, 4, 6, \ldots (c-1)\rbrace$ and $\lbrace (c+2), (c+4), \ldots, i \rbrace$, delete edges $(c,1)$ and $(c,c+1)$.\\
For Clusters $\lbrace 1, 3, 5, \ldots (c-2)\rbrace$ delete edges $(c,c-1)$ and $(c,c+1)$.\\
For Clusters $\lbrace (c+1), (c+3), \ldots (i-1)\rbrace$ delete edges $(c,1)$ and $(c,i)$.\\

Also $\forall$ cluster $x, 1 \leq x \leq i$, delete edges $(x-2, x-1)$ and $(x+1, x+2)$.\\
%

\subsubsection{Correctness Argument for Hamiltonicity for $OTIS(BF(2m+1,2m+1))$, $m>1$.}

\begin{claim}
 All the Hamiltonian edges of $OTIS(BF(2m+1,2m+1))$ can be inferred by deleting the Key edges mentioned and using the inference rules IR 1 and IR 2.
\end{claim}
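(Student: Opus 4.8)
The plan is to prove the claim by \emph{constraint propagation}: treat each edge of $OTIS(BF(2m+1,2m+1))$ as a Boolean variable (Hamiltonian or non-Hamiltonian), note that the target object is a $2$-regular spanning subgraph (exactly two Hamiltonian edges at every vertex, by the very definition of a Hamiltonian cycle), and show that deleting the listed key intracluster edges together with repeated applications of IR~1 and IR~2 pins down every variable uniquely. The backbone of the argument is a degree count at each vertex: a vertex of degree $d$ must shed exactly $d-2$ of its incident edges, so it suffices to verify that the explicit deletions plus the forced choices leave precisely two surviving edges at every vertex, and never strip a vertex below two.

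First I would isolate the ``seed'' vertices. By Observation~3.1 the $|V_B|-1$ diagonal vertices $\langle x,x\rangle$ (with $x\neq c$) have degree $2$ and carry no transpose edge, so both of their intracluster edges are Hamiltonian with no choice involved; these are the starting points of the cascade. I would then run the bookkeeping over the remaining vertex classes, namely the degree-$3$ interior vertices $\langle x,y\rangle$ ($y\neq c,\ x\neq y$), the $|V_B|-1$ degree-$5$ column vertices $\langle x,c\rangle$, and the unique degree-$4$ vertex $\langle c,c\rangle$. For each class I would check that the key edges listed for the six special clusters ($1,\ c+1,\ c-1,\ c-2,\ i,\ i-1$), together with the uniform deletions prescribed for the remaining clusters (the cutvertex edges $(c,1),(c,c+1),(c,c-1),(c,i)$ and the pairs $(x-2,x-1),(x+1,x+2)$), remove exactly $d-2$ edges wherever deletions are explicit, and that wherever fewer than $d-2$ are deleted outright, IR~1 (saturation, once two incident edges are known Hamiltonian) and IR~2 (forcing at a non-Hamiltonian edge joining two degree-$3$ vertices) supply the rest.

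I would trace the cascade in the order dictated by the three construction steps: the explicit deletions of Step~1 immediately saturate the degree-$5$ vertices $\langle x,c\rangle$ and the degree-$4$ vertex $\langle c,c\rangle$ (Step~2, via IR~1), which identifies certain incident intracluster and transpose edges as Hamiltonian; these identifications then trigger IR~2 along chains of degree-$3$ vertices (Step~3); and because every transpose edge $(\langle g,u\rangle,\langle u,g\rangle)$ is shared between clusters $g$ and $u$, forcing its status in one cluster propagates the constraint into the other. This I would organize as a finite case analysis over the constantly many structural types of cluster, verifying that the propagation neither stalls with an undetermined edge nor over-constrains a vertex to three Hamiltonian edges or starves it to one.

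The main obstacle is precisely this global consistency across transpose edges near the cutvertex. Since every deleted key edge is intracluster, the Hamiltonian status of each intercluster edge must be produced \emph{purely} by propagation; the delicate point is that the six specially-treated clusters and the parity-dependent cutvertex edges $(c,1),(c,c+1),(c,c-1),(c,i)$ are chosen exactly so that the forced Hamiltonian edges ``hand off'' correctly from one cluster to the next and close up without creating a conflict at $\langle c,c\rangle$ or at the columns $\langle x,c\rangle$. Once this interlocking is checked case by case, every edge has received a label, which is exactly the assertion of the claim; the companion fact that the resulting $2$-regular subgraph is a single spanning cycle (rather than a union of shorter disjoint cycles) is then verified separately by following the inferred Hamiltonian edges once around the whole network.
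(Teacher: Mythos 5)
Your proposal is correct and takes essentially the same route as the paper's proof: both seed a constraint propagation with the listed key deletions and then let IR~1 and IR~2 cascade through the clusters, with the transpose edges carrying each inference from one cluster to the next, organized as a finite case analysis over cluster classes (the paper executes this for clusters $\lbrace 2,4,\ldots,c-1\rbrace$ and $\lbrace 1,3,\ldots,c-2\rbrace$ and appeals to symmetry for the right half). Your one substantive addition is the closing caveat that inferring a $2$-regular spanning subgraph is not yet Hamiltonicity --- one must still check that the inferred edges form a single cycle rather than a union of disjoint subcycles --- a verification the paper's proof of this claim omits entirely.
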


\begin{proof}
First we concentrate on the clusters $\lbrace 2, 4, 6, \ldots (c-1)\rbrace$.
\begin{enumerate}
\item{We mark the intercluster edges $(\langle 1, p \rangle, \langle p, 1 \rangle)$, $(\langle c-1, p \rangle, \langle p, c-1 \rangle)$ and 
$(\langle c-2, p \rangle, \langle p, c-2 \rangle)$ as Hamiltonian edges (Using IR 1) $ \forall p \in \lbrace 2, 4, 6, \ldots (c-1)\rbrace$.}

\item{In clusters $\lbrace 3, 5, 7, \ldots (c-2)\rbrace$, applying IR 2 for the vertex 2, we infer that the intercluster edges $(\langle 2, p \rangle, \langle p, 2 \rangle)$
$ \forall p \in \lbrace 3, 5, 7, \ldots (c-2)\rbrace$ are Hamiltonian edges.}

\item{We also know that $(\langle p, x-2 \rangle, \langle x-2, p \rangle)$, $(\langle p, x-1 \rangle, \langle x-1, p \rangle)$, $(\langle p, x+2 \rangle, \langle x+2, p \rangle)$,
$(\langle p, x+1 \rangle, \langle x+1, p \rangle)$ and the edges $(\langle p, x-3 \rangle, \langle p, x-2 \rangle)$, $(\langle p, x+3 \rangle, \langle p, x+2 \rangle)$  are Hamiltonian edges (Using IR 2) $ \forall p \in \lbrace 2, 4, 6, \ldots (c-1)\rbrace$.}

\item {Using (2) and (3) and IR 2, we infer set of non-Hamiltonian edges in clusters $\lbrace 2, 4, 6, \ldots (c-1)\rbrace$: \\

\begin{itemize}
 \item {The set $S_{e1} = \lbrace (x-2,x-1), (x-4,x-3), \ldots (2,3) \rbrace$ when $(x-2) \neq c$. Else ignore this set.\footnote{For Cluster 2, this set is ignored.}}
\item{The set $S_{e2} = \lbrace (x+1,x+2), (x+3,x+4), \ldots (c-2,c-1) \rbrace$ when $(x+1) \neq c$. Else ignore this set. \footnote{For Cluster $(c-1)$, this set is ignored.}}
\end{itemize}}

\end{enumerate}

This completes the description of non-Hamiltonian edges within the clusters $\lbrace 2, 4, 6, \ldots (c-1)\rbrace$, which decides all the Hamiltonian neighbours of the vertices within these clusters. Below we illustrate with the example of Cluster 2.
 \begin{itemize}
  \item {Hamneighbour$\langle2,1\rangle = \langle2,2\rangle, \langle1,2\rangle$.}
   \item {Hamneighbour$\langle2,2\rangle= \langle2,1\rangle, \langle1,3\rangle$.}
 \item {Hamneighbour$\langle2,3\rangle = \langle2,2\rangle, \langle3,2\rangle$.}
 \item {Hamneighbour$\langle2,4\rangle = \langle2,5\rangle, \langle4,2\rangle$.}
 \item {Hamneighbour$\langle2,5\rangle = \langle2,4\rangle, \langle5,2\rangle$.}

$\vdots$

 \item {Hamneighbour$\langle2,(c-3)\rangle = \langle2,(c-2)\rangle, \langle(c-3),2\rangle$.}
\item {Hamneighbour$\langle2,(c+3)\rangle = \langle2,(c+2)\rangle, \langle(2,(c+4)\rangle$.}

$\vdots$

\item {Hamneighbour$\langle2,(i-1)\rangle = \langle2,(i-2)\rangle, \langle(2,i\rangle$.}
\item {Hamneighbour$\langle2,i\rangle = \langle2,(i-1)\rangle, \langle(2,c\rangle$.}
\end{itemize}

By similar arguments, we infer set of non-Hamiltonian edges in clusters $\lbrace 1, 3, 5, \ldots (c-2)\rbrace$, which are as follows:\\
\begin{itemize}
 \item {The set $S_{o1} = \lbrace (x-2,x-1), (x-4,x-3), \ldots (1,2) \rbrace$ when $(x-2) \neq c$. Else ignore this set.\footnote{For Cluster 1, this set is ignored.}}
\item{The set $S_{o2} = \lbrace (x+1,x+2), (x+3,x+4), \ldots (c-3,c-2) \rbrace$ when $(x+1) \neq c$. Else ignore this set.}
\end{itemize}
This completes the description of non-Hamiltonian edges within the clusters $\lbrace 1, 3, 5, \ldots (c-2)\rbrace$. \\

By symmetry, we can infer the set of non-Hamiltonian edges in clusters $\lbrace (c+2), (c+4), \ldots, i \rbrace$ and $\lbrace (c+1), (c+3), \ldots (i-1)\rbrace$.\\
\begin{figure}[ht]
\begin{center}
\includegraphics[scale=.23]{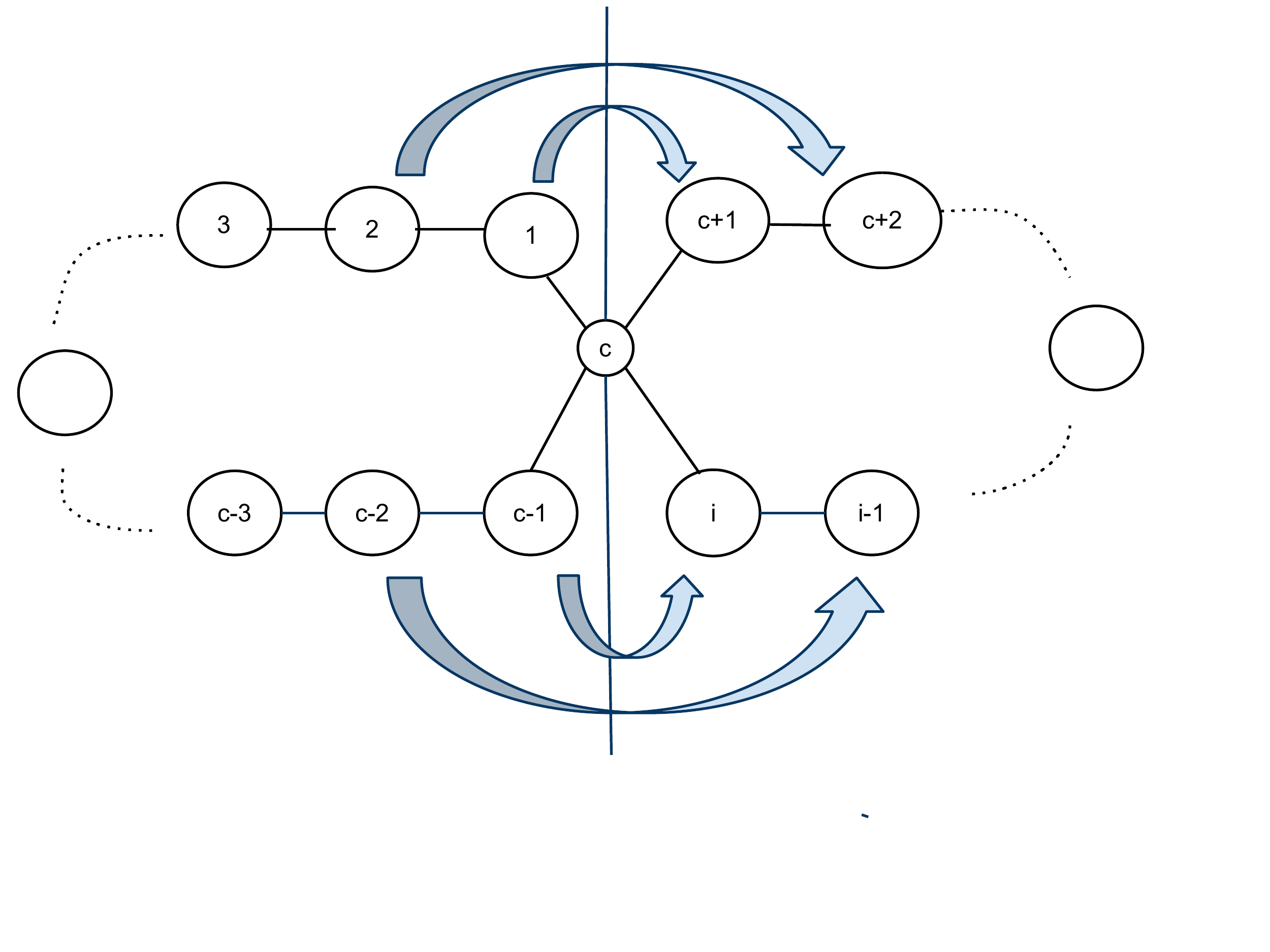}
\caption{$BF(2m+1,2m+1)$: The mapping between the labels}
\end{center}
\end{figure}

Hence, all the Hamiltonian edges of $OTIS(BF(2m+1,2m+1))$ can be inferred using be deleting the Key edges mentioned and using the inference rules IR 1 and IR 2.
\end{proof}



%

%
%

\subsection{Construction of Hamiltonian Cycle for $OTIS(BF(2m+1,2k))$}

Here we identify the key non-Hamiltonian edges in two parts.

First we identify the key Non-Hamiltonian edges for $OTIS(BF(3,2k))$ and then for $OTIS(BF(2m+1,2k))$, where $m>1$. 
\subsection{Key Non-Hamiltonian edges for  $OTIS(BF(3,2k))$}
\begin{description}
 \item [Cluster 1:]
$(3,2), (3,4), (3,i)$
\item[Cluster 2:]
$(3,1), (3,4)$ And the set $S_2=\lbrace (5,6), (7,8), \ldots, (i-1,i)\rbrace$
\item[Cluster 3:]
$(3,1), (3,4)$
\item[Cluster 4:]
$(3,1), (3,2), (3,i)$
\item[Cluster $5, 6, \ldots, (i-1)$:]
$(3,2), (3,i)$
\item[Cluster $i$:]
$(3,1), (3,2)$ and the set $S_i=\lbrace (4,5), (6,7), \ldots, (i-2,i-1)\rbrace$ 
\end{description}
Now join the intercluster edges at the at both endvertices of the deleted intercluster edges. This completes the Hamiltonian Cycle.

The construction for $OTIS(BF(2m+1,2k))$, where $m>1$ is shown in appendix B.

%
%

\section{Proof of Non-Hamiltonicity of $OTIS(BF(4,4))$ and  $OTIS(BF(4,6))$}

\subsection{ Proof that $OTIS(BF(4,4))$ is not Hamiltonian}

noindent{We present the proof of non-Hamiltonicity using counting argument.\\
\noindent{First, let us count the total number of edges in the graph. Total number of edges in $OTIS(BF(4,4))$ is $\frac{\displaystyle\sum_{i = 1}^{n} {d_i} }{2} =77$, where $d_i$ denotes degree of vertex $i$
and $n= |V(OTIS(BF(4,4)))|$. If a Hamiltonian Cycle exists, it will use up $49$ edges, as $n = 49$. So there are exactly $(77-49)=28$ non-Hamiltonian edges in the graph.

Now, let us count the number of non-Hamiltonian edges in a different way. Let us look into $OTIS(BF(4,4))$ carefully.
There are six degree 5 vertices, namely, $<1,4>, <2,4>, <3,4>, <5,4>, <6,4>$ and $ <7,4> $. Neighbours of these six vertices are disjoint. Also there is exactly one, degree 4 vertex: $<4,4>$
These vertices together will contribute to $6 \cdot 3 +2 = 20$ non-Hamiltonian edges.
Now let us look into the subgraph induced by vertices of degree 3 only, which do not have any degree 5 or degree 4 neighbour (Figure 4). Maximum Independent Subset induced by these vertices is of cardinality = 9. Hence, these vertices, accounts for 9 non-Hamiltonian edges which have not been counted yet. Hence, the total number of
non-Hamiltonian edges $= 20+9 =29$, which does not agree with the previous count, $28$.  Hence a contradiction. So $OTIS(BF(4,4))$ is not Hamiltonian.}

\begin{figure}[ht]
\begin{center}
\includegraphics[scale=.28]{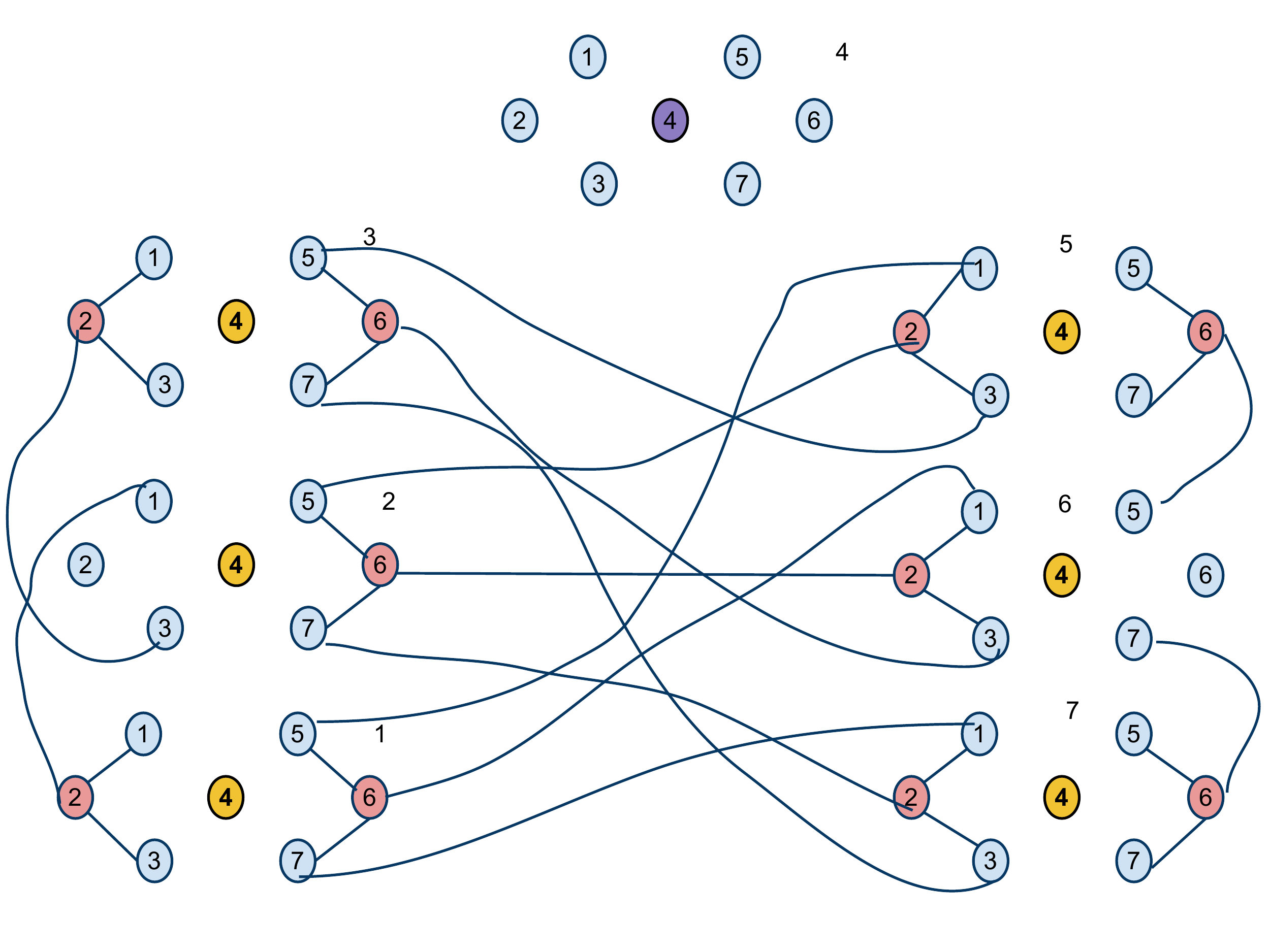}
\caption{$OTIS(BF(4,4))$:Vertices of degree 3 vertices, with no degree 5 and 4 neighbours are shown in red.}
\end{center}
\end{figure}

However, this argument cannot be used to prove that $OTIS(BF(4,6))$ is Non-Hamiltonian, as $OTIS(BF(4,6))$ has a cycle cover of length 2. Hence, its Non-Hamiltonicity
cannot be captured through this counting argument. The proof of Non-Hamiltonicity of $OTIS(BF(4,6))$ is presented in appendix C.

%
%

\section{ $OTIS(BF(n))$ is Hamiltonian}
It has been shown in the paper\cite{Barth} that $BF(n)$ has two edge-disjoint Hamiltonian Cycles, by giving a recursive method of construction of the cycles. 
Hence the base network of $OTIS(BF(n))$ is Hamiltonian. Combining this result, with \cite{Parhami}, it is easily seen that $OTIS(BF(n))$ is Hamiltonian. 

%
%

\section{Independent Spanning Trees Construction}

\begin{theorem}[\cite{Itai}]
Given any 2-connected graph G and a vertex $s$ in G, there are two spanning trees such that the paths from $s$ to any other node in G on the trees are node disjoint.
\end{theorem}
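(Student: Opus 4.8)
The plan is to reduce the statement to the existence of an \emph{$st$-numbering} of $G$ and then read off the two trees directly from that numbering. First I would fix a neighbour $t$ of $s$ (one exists since $G$ is connected) and seek a bijection $g : V(G) \to \{1, \dots, n\}$ with $g(s) = 1$, $g(t) = n$, and the property that every vertex $v \notin \{s, t\}$ has at least one neighbour of strictly smaller $g$-value and at least one neighbour of strictly larger $g$-value. Such a numbering is exactly a bipolar orientation of $G$ with source $s$ and sink $t$, and its existence for every $2$-connected graph is the classical ingredient I would invoke (equivalently, it follows from the Lempel--Even--Cederbaum construction, or from any open ear decomposition of $G$).

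Given the numbering, I would build two spanning trees, both rooted at $s$. In $T_1$, every vertex $v \neq s$ is joined to one of its neighbours of strictly smaller $g$-value; following these parent edges from any vertex strictly decreases $g$, so the chain terminates at the unique vertex of value $1$, namely $s$, and $T_1$ is a spanning tree rooted at $s$. Symmetrically, in $T_2$ every vertex $v \neq t$ is joined to a neighbour of strictly larger $g$-value, which produces a spanning tree rooted at $t$; I then re-root it at $s$ using the edge $(s,t)$, so that in $T_2$ the path from $s$ first steps to $t$ and thereafter descends in $g$-value.

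The heart of the argument is the disjointness of the two $s$-to-$v$ paths. For a fixed vertex $v$, the path $P_1$ in $T_1$ increases monotonically in $g$-value from $g(s) = 1$ up to $g(v)$, so every internal vertex of $P_1$ has $g$-value strictly less than $g(v)$. The path $P_2$ in $T_2$ first jumps to $t$ (value $n$) and then decreases monotonically down to $g(v)$, so every internal vertex of $P_2$ has $g$-value strictly greater than $g(v)$. Since one set of intermediate values lies strictly below $g(v)$ and the other strictly above, the two paths can meet only at their shared endpoints $s$ and $v$; hence they are internally node-disjoint, which is precisely the required property.

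I expect the main obstacle to be the existence of the $st$-numbering itself, which is where $2$-connectivity is genuinely used: without it a vertex could fail to have both a lower and a higher neighbour, breaking the whole construction. I would secure this either by citing the standard bipolar-orientation result of \cite{Itai} or by an induction along an open ear decomposition $G = C \cup P_1 \cup \dots \cup P_k$: the initial cycle $C$ admits an obvious numbering, and each added ear is a path whose interior can be numbered with consecutive fresh values inserted between the $g$-values of its two already-numbered endpoints, preserving the defining property at every stage. Once the numbering is in hand, the tree construction and the monotonicity argument above are routine.
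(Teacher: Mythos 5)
The paper does not prove this theorem at all: it imports the result by citation from Itai and Rodeh, whose original argument is exactly the $st$-numbering construction you describe, so your proposal is in essence the proof of the cited source rather than a new route. Your argument is correct as written --- including the one delicate point, namely that the edge at $s$ in $T_2$ must be chosen to be $(s,t)$ itself (not an arbitrary higher-valued neighbour), so that every $T_2$-path from $s$ ascends immediately to $t$ and then descends, which is precisely what makes the ``internal values below $g(v)$ versus above $g(v)$'' disjointness argument go through.
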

\noindent{The graph families we are considering, $OTIS(BF(2m+1,2n+1))$ and $OTIS(BF(2m+1,2k))$, are 2-edge connected. Hence there exists two Independent Spanning Trees. We construct
two Independent Spanning Trees as follows:\\ 
We know that construction of Hamiltonian Cycle is linear in number of vertics of the OTIS-network on $OTIS(BF(2m+1,2n+1))$ and $OTIS(BF(2m+1,2k))$ (From observation 4.1).}

\noindent{Once the Hamiltonian Cycle is constructed, we can construct two rooted Independent Spanning Trees in $O(1)$ time as follows:\\}
\begin{itemize}
 \item {Pick any vertex as root and denote it as $r$}
 \item{Delete one edge incident to $r$. This gives a spanning Tree $T_1$}
 \item{Now, retain the edge previously deleted, and delete the other edge incident to $r$. This gives another spanning tree $T_2$}
 
\end{itemize}

Clearly, for all vertices $v$ in the graph, paths connecting $r$ and $v$ in $T_1$ and $T_2$ are edge and vertex disjoint. 
Even if we make $BF(2m+1,2n+1)$ or $BF(2m+1,2k)$ denser by introducing chords
inside the cycles $C_{2m+1}$, $C_{2n+1}$ or $C_{2k}$ such that at least one of the vertices retain degree two, the two rooted Independent Spanning Trees constructed by
this algorithm in will still be valid, which will run in time linear in the number of vertices of the OTIS-graph. But the generalized algorithm will show poor performance
as it runs in time linear in the number of edges.

\section{Conclusion}
In this paper, we have shown that existence of Hamiltonian Path on the base graph is not 
a sufficient condition for the OTIS network to be Hamiltonian. It would be interesting to investigate whether it is a necessary condition.
Another important open direction is to see if the Independent Spanning Tree conjecture holds on any $k$-connected OTIS network for arbitrary values of $k$, as this is
a most important aspect of fault tolerance is any distributed network.

\bibliographystyle{splncs}
\bibliography{ref}

\appendix
\section{Constructions for $OTIS(BF(c=3,2n+1))$ and $OTIS(BF(2m+1,2n+1))$, where $m >1$, $n>3$ and $n>m$}

\subsection{Key non-Hamiltonian edges for  $OTIS(BF(c=3,2n+1))$}
We determine the key non-Hamiltonian intracluster edges for each cluster. 
\begin{description}
\item[Cluster 1:]
The set $S_1 =\lbrace (c+2,c+3), (c+4,c+5), \ldots, (i-2,i-1)\rbrace$ and $(c,i), (c,c-1)$ and $(c,c+1)$.
\item[Cluster 2:]
$(c,1)$ and $(c, c+1)$ and the set $S_2=\lbrace(6, 7), (i-3, i-2)\rbrace$ iff $7> i$, else ignore this set.
\item[Cluster 3:]
$(c,1), (c,c+1)$.
\item[Cluster $(c+1)$:]
$(c,1), (c, c-1), (c,i)$ and $(c+2, c+3), (i-2, i-1)$
\item[Cluster $(i-1)$:]
$(c,1), (c,i)$ and the set $S_{(i-1)} = \lbrace (4,5), (6, 7), (i-3, i-2) \rbrace $ iff $5<(i-4)$, else ignore this set.
\end{description}

Also $\forall$ cluster $x, 1 \leq x \leq i$, delete edges $(x-2, x-1)$ and $(x+1, x+2)$.\\

\subsection{Key non-Hamiltonian edges for $OTIS(BF(2m+1,2n+1))$, where $m >1$, $n>3$ and $n>m$ }

\begin{description}
\item[Cluster 1:]
$ (c,c-1), (c,c+1), (c,i)$ and the sets $S_{1^l} =\lbrace (2,3), (4,5), \ldots, (c-1,c)\rbrace$. $S_{1^r} =\lbrace (c+2,c+3), (c+4,c+5), \ldots, (i-2,i-1)\rbrace$.
\item[Cluster 2:]
$(c,1), (c,i), (c-2,c-1)$and the set $S_2=\lbrace (c+5,c+6), (c+7,c+8), \ldots, (i-5,i-4)\rbrace$ where $ (i-5)\geq(c+5)$. Else ignore the set $S_2$.
\item[Cluster 3:]
$(i-1,i-2), (c,c-1), (c,c+1)$ and the set $S_3=\lbrace (c+5,c+6), (c+7,c+8), \ldots, (i-5,i-4)\rbrace$ where $ (i-5)\geq(c+5)$. Else ignore the set $S_3$.
\item[Cluster $4, \ldots, (c-3)$:]
$(c,1), (c,c+1), (i-2, i-1)$
\item[Cluster $(c-2)$:]
$(c,c-1), (c,c+1), (i-2,i-1)$.
\item[Cluster $(c-1)$:]
$(c,1), (c,i), (2,3)$ and the set $S_6 =\lbrace (c+4,c+5), (c+6,c+7), \ldots, (i-2,i-1)\rbrace$

\item[Cluster $c$:]
$(c,1), (c,c+1)$.

\item[Cluster $(c+1)$:]
$(c,1), (c,c-1), (c,i)$ and $(c+2,c+3), (i-2,i-1)$.
 
\item[Cluster $(c+2)$:]
$(c,c-1), (c,c+1), (i-1,i)$.
\item[Cluster $(c+3)$:]
$(c,1), (c,c+1), (i-1,i)$.
\item[Cluster $(c+4)$:]
$(c,c-1), (c,1), (i-1,i)$.

\item[Cluster $(c+5)$ to $(i-4)$:]
$(2,3)$ [only where $(i-5)\geq(c+5)$, else do not delete this edge.] $(c,1), (c,c-1)$ and $(i,i-1)$.
\item[Cluster $(i-3)$:]
$(c,1), (c,c-1), (i-1,i)$.
\item[Cluster $(i-2)$:]
$S_{(i-2)} = \lbrace (3,4),(5,6), \ldots, (c-2,c-1) \rbrace$ and $(c,1), (c,c+1)$ and $(i,i-1)$.
\item[Cluster $(i-1)$:]
The set $S_{{(i-1)}^l} = \lbrace (3,4),(5,6), \ldots, (c-2,c-1)$ and $ (c,1), (c,i)$ and the set $S_{{(i-1)}^r} =\lbrace (c+1,c+2), (c+3,c+4), \ldots, (i-3,i-2)\rbrace$.
\item[Cluster $i$:]
$(1,2), (c,c-1), (c,c+1)$ and the set $S_i =\lbrace (c+2,c+3), (c+4,c+5), \ldots, (i-2,i-1)\rbrace$.
\end{description}

In addition to this, $\forall$ cluster $x, 1 \leq x \leq (c-1)$, delete edges $(x-2, x-1)$ and $(x+1, x+2)$.\\

\subsection{Explicit constructions for $OTIS(BF(3,3))$ and $OTIS(BF(5,7))$} 
 We show the explicit constructions for $OTIS(BF(3,3))$  (Figure 5) and $OTIS(BF(5,7))$  (Figure 6).

\begin{figure}
   \begin{center}
     \includegraphics[scale=.3]{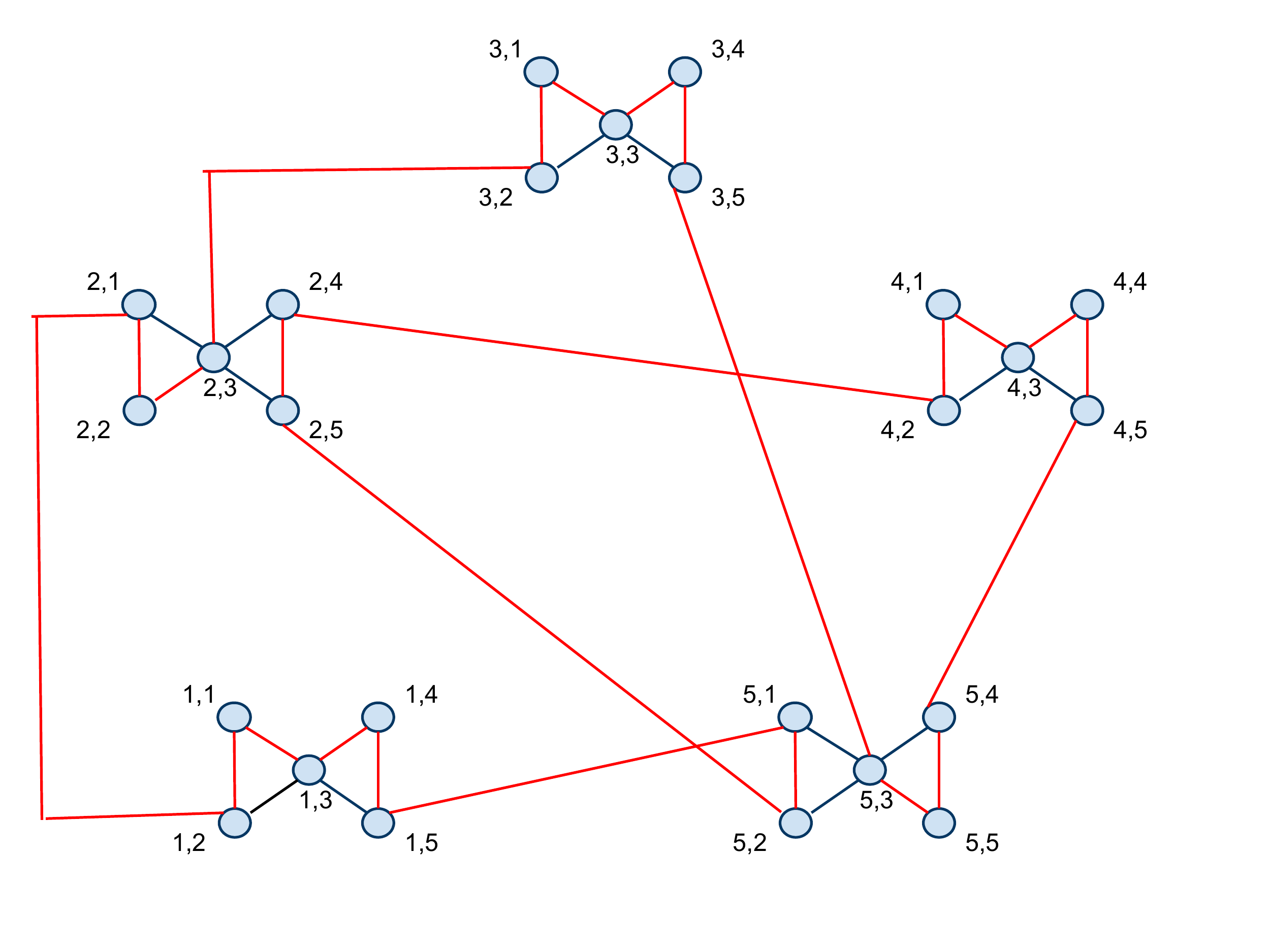}
 
     \caption{Hamiltonian Cycle on $OTIS(BF(3,3))$ shown in red colour}

   \end{center}
 \end{figure}

\begin{figure}
   \begin{center}
  \includegraphics[scale=.35]{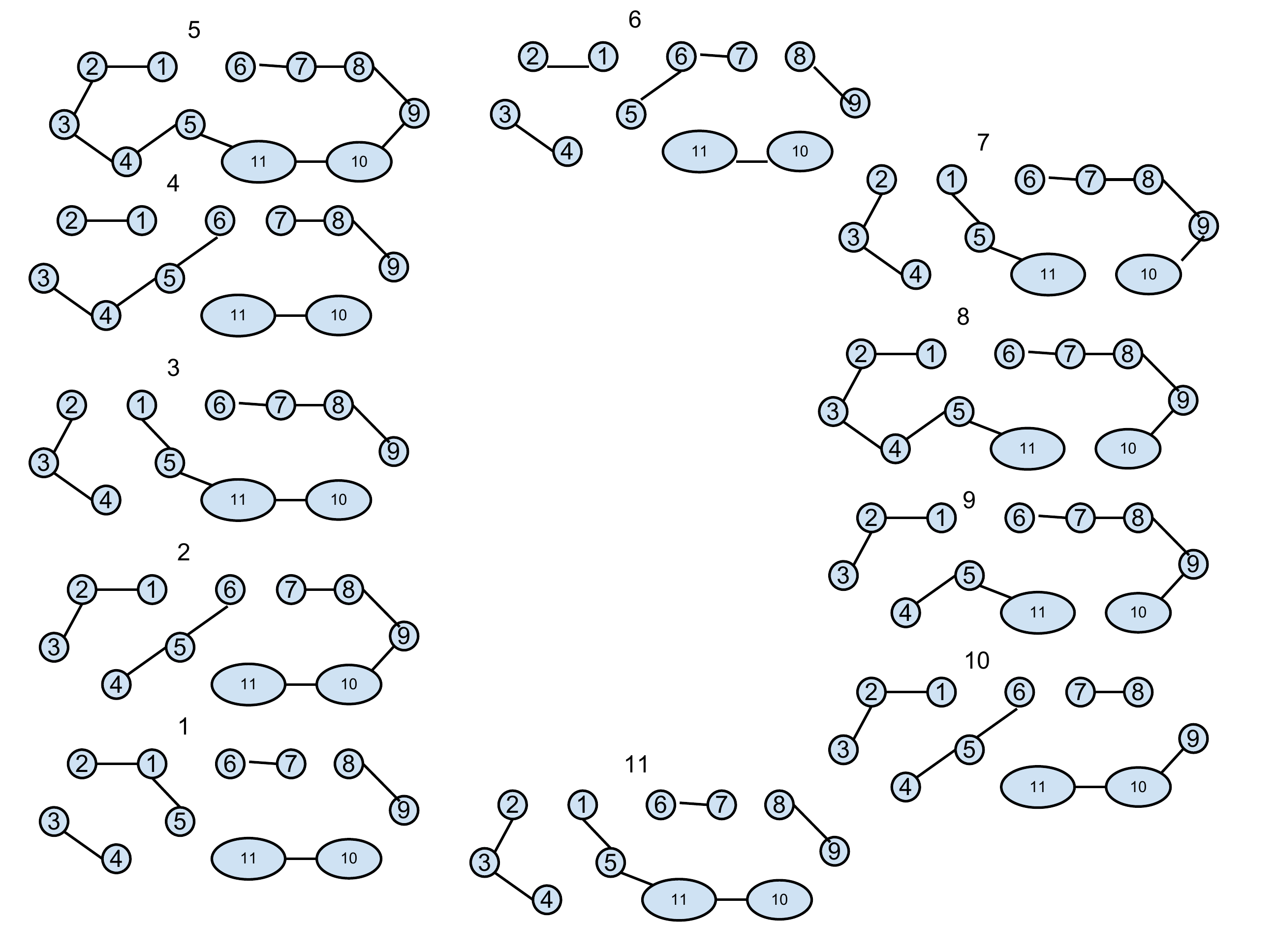}

     \caption{Joining the intercluster edges incident on the unsaturated vertices completes the Hamiltonian Cycle on $OTIS(BF(5,7))$}

   \end{center}
 \end{figure}

\section{Constructions for $OTIS(BF(3,2k))$ and $OTIS(BF(2m+1,2k))$, where $m>1$ }

\subsection{Key Non-Hamiltonian edges for $OTIS(BF(2m+1,2k))$, where $m>1$}

\begin{description}
 \item [Cluster 1:]
$(c,c-1), (c,c+1), (c,i)$ and the set $S_1 = \lbrace (2,3), (4,5), (6,7), \ldots, (c-1,c) \rbrace$. 

\item[Cluster 2:]
$(c,1), (c,c+1), (c-2,c-1)$ and the set $_2=\lbrace (c+2,c+3), (c+4,c+5), \ldots, (i-3,i-2)\rbrace$ where $ i\geq(c+3)$. Else ignore the set $S_2$.
\item[Cluster 3:]
$(c,c-1), (c,c+1)$ and the set $S_3=\lbrace (c+2,c+3), (c+4,c+5), \ldots, (i-3,i-2)\rbrace$ where $ i\geq(c+3)$. Else ignore the set $S_3$. Delete $(i-1,i)$ if $4<(c-1)$.
 
\item[Cluster $3, 4, \ldots, (c-3)$:]
$(i-1,i)$ if $4<(c-1)$
\item[Cluster $4, 5, \ldots, (c-3)$:]
$(c,1), (c,c+1)$ if $4<(c-3)$
\item[Cluster $(c-2)$:]
$(c,c-1), (c,c+1)$ 
\item[Cluster $(c-1)$:]
$(c,1), (c,i), (2,3)$ and the set $S_{(c-1)} =\lbrace (c+1,c+2), (c+3,c+4), \ldots, (i-2,i-1)\rbrace$
\item[Cluster $c$:]
$(c,1), (c,c+1)$.

\item[Cluster $(c+1)$:]
$(c,1), (c,c-1), (c,i)$ and the set $S_{(c+1)}= \lbrace (2,3), (4,5), \ldots, (c-3,c-2)\rbrace$

\item[Cluster $(c+2)$ to $(i-2)$:]
$(c,c-1), (c,i)$ and the edge $(2,3)$ if $(c+3) < i$.

\item[Cluster $(i-1)$:]
$ (c,c-1), (c,i)$ and the set $S_{i-1} =\lbrace (3,4), (5,6), \ldots, (c-4,c-3)\rbrace$ if $4<(c-1)$
\item[Cluster $i$:]
$(c,1), (c,c-1)$ and the sets $S_{{i}^l} =\lbrace (3,4), (5,6), \ldots, (c-4,c-3)\rbrace$ and $S_{{i}^r}= \lbrace (c+1,c+2), (c+3,c+4), \ldots, (i-2,i-1)\rbrace$
\end{description}
In addition to this, $\forall$ cluster $x, 1 \leq x \leq (c-1)$, delete edges $(x-2, x-1)$ and $(x+1, x+2)$.\\

\section{Proof that $OTIS(BF(4,6))$ is not Hamiltonian}
\begin{figure}[ht]
\begin{center}
\includegraphics[scale=.28]{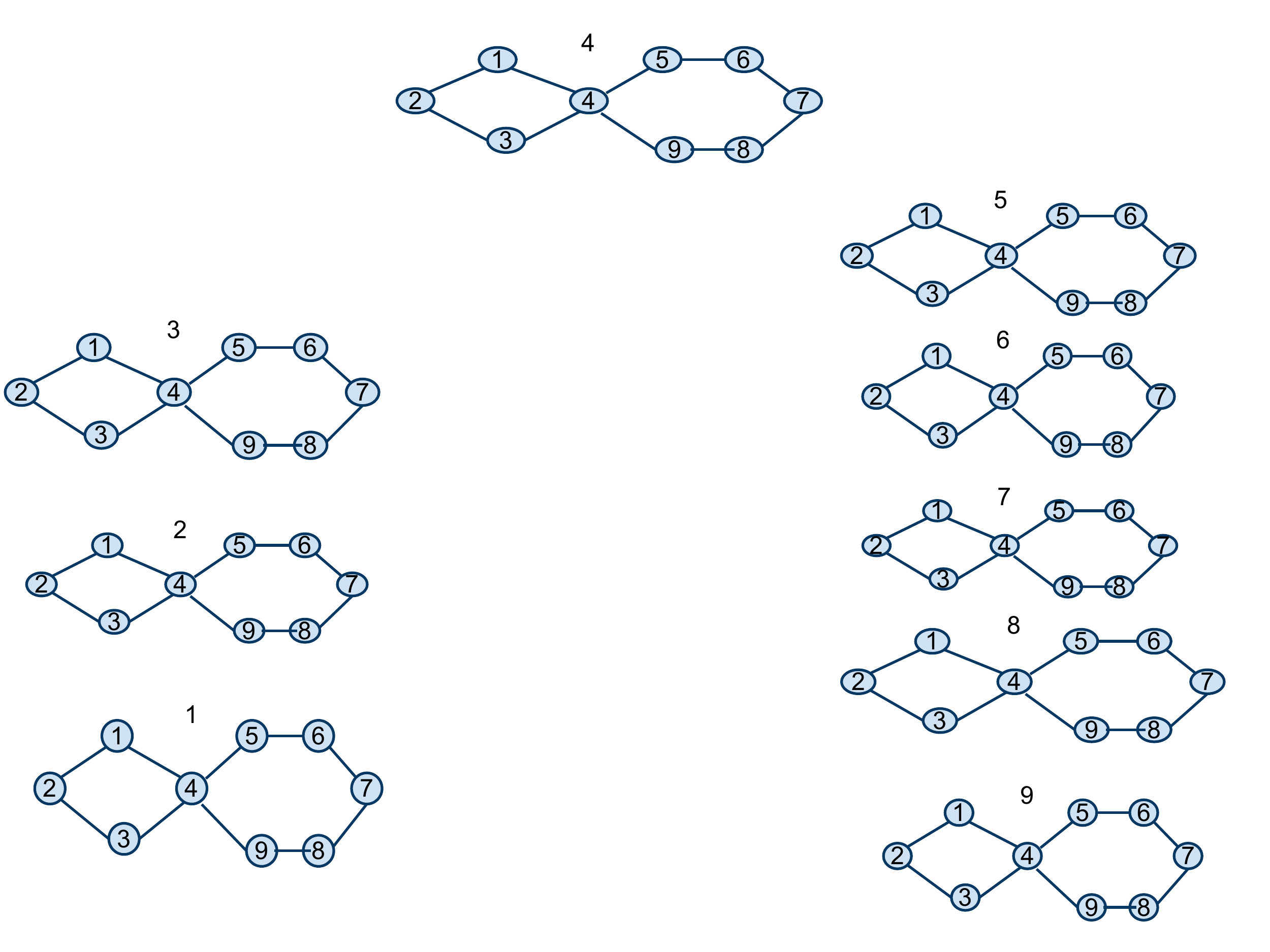}
\caption{$OTIS(BF(4,6))$: The intercluster edges are not shown to maintain clarity}
\end{center}
\end{figure}

We prove our claim in parts. Firstly we identify the forced edges in a Hamiltonian Cycle, assuming that one exists. Then we make a choice of picking one edge as Hamiltonian from an option of two, without any loss of generality. Finally, we arrive at a contradiction.

\claim{Vertex $\langle4,4\rangle$ cannot have both $(\langle4,1\rangle,\langle4,4\rangle)$ and $(\langle4,3\rangle,\langle4,4\rangle)$ as Hamiltonian edges.}

\proof{ If possible let both $\langle4,1\rangle$ and $\langle4,3\rangle$ be Hamiltonian neighbours of $\langle4,4\rangle$.Now, vertex $\langle4,2\rangle$ has to have either of $(\langle4,1\rangle,\langle4,2\rangle)$ and $(\langle4,2\rangle,\langle4,3\rangle)$ as its 
Hamiltonian edges. Without loss of generality, let $(\langle4,2\rangle,\langle4,3\rangle)$ be the Hamiltonian edge.This forces the following edges,
\begin{description}

 \item [Cluster 1:] $(2,3)$ 
\item[Cluster 2:]$(1,4)$
\item[Intercluster edges:] $(\langle4,1\rangle,\langle1,4\rangle), (\langle4,2\rangle,\langle2,4\rangle), (\langle1,3\rangle,\langle3,1\rangle)$
\end{description}
This in turn, forces the following edges:
\begin{description}
 \item [Cluster 3:] $(1,4)$ 
\item[Intercluster edges:] $(\langle3,2\rangle,\langle2,3\rangle)$

\end{description}

This clearly forms a forced subcycle. So both $(\langle4,1\rangle,\langle4,4\rangle)$ and $(\langle4,3\rangle,\langle4,4\rangle)$ cannot be Hamiltonian edges of $\langle4,4\rangle$.}

\claim{Vertex $\langle4,4\rangle$ cannot have both $(\langle4,1\rangle,\langle4,4\rangle)$ and $(\langle4,3\rangle,\langle4,4\rangle)$ as non-Hamiltonian edges.}
\proof{If possible, let both $(\langle4,1\rangle,\langle4,4\rangle)$ and $(\langle4,3\rangle,\langle4,4\rangle)$ edges be dropped from the Hamiltonian Cycle(assuming one exists.This forces the following edges:
\begin{description}
 \item [Cluster 4:] $(1,2), (2,3)$
 \item[Cluster 1:]$(2,3)$
 \item[Cluster 3:]$(1,2)$
 \item[Intercluster edges:] $(\langle3,1\rangle,\langle1,3\rangle), \langle4,1\rangle,\langle1,4\rangle), (\langle3,4\rangle,\langle4,3\rangle) $
\end{description}

This clearly forms a forced subcycle. So both $(\langle4,1\rangle,\langle4,4\rangle)$ and $(\langle4,3\rangle,\langle4,4\rangle)$ cannot be non-Hamiltonian edges of $\langle4,4\rangle$.}

Hence the only possibility is either of $(\langle4,1\rangle,\langle4,4\rangle)$ and $(\langle4,3\rangle,\langle4,4\rangle)$ is Hamiltonian edge. Without loss of generality (due to the symmetric structure), let $(\langle4,3\rangle,\langle4,4\rangle)$ be the Hamiltonian edge.
 
\begin{figure}[ht]
\begin{center}
\includegraphics[scale=.28]{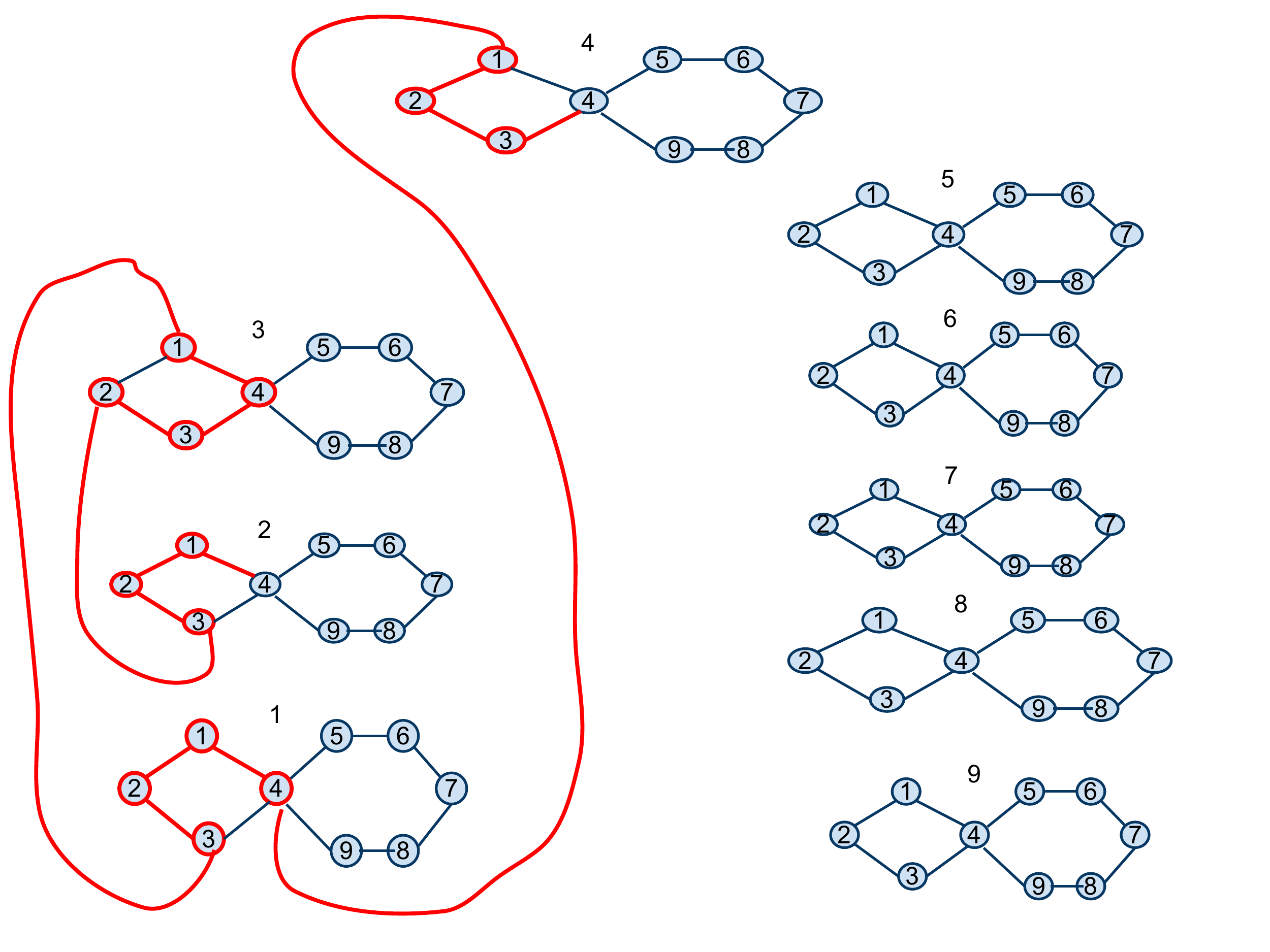}
\caption{$OTIS(BF(4,6))$:The vertices and edges shown in red are saturated vertices and forced edges respectively}
\end{center}
\end{figure}

\noindent{Now, in this figure, (Figure 8), the vertices which have already obtained both of its Hamiltonian neighbours, i.e., saturated vertices, are shown in red. We see that $\langle2,4\rangle$ and $\langle4,4\rangle$ are the vertices
which have obtained exactly one of its Hamiltonian neighbours and rest of the vertices are either completely saturated, or not saturated. So a Hamiltonian cycle exists if and only if, there exists a path between 
$\langle2,4\rangle$ and $\langle4,4\rangle$ spanning all the unsaturated vertices.}\\

\noindent{Vertex $\langle4,4\rangle$ has to have either of $\langle4,5\rangle, \langle4,9\rangle$ as its Hamiltonian neighbour. Without any loss of generality, let us assume $\langle4,9\rangle$ is its Hamiltonian neighbour. This again forces a set of edges and makes certain vertices saturated.   
In the following figure, the forced edges and saturated vertices are marked in red. (Figure 9)}

\begin{figure}[ht]
\begin{center}
\includegraphics[scale=.28]{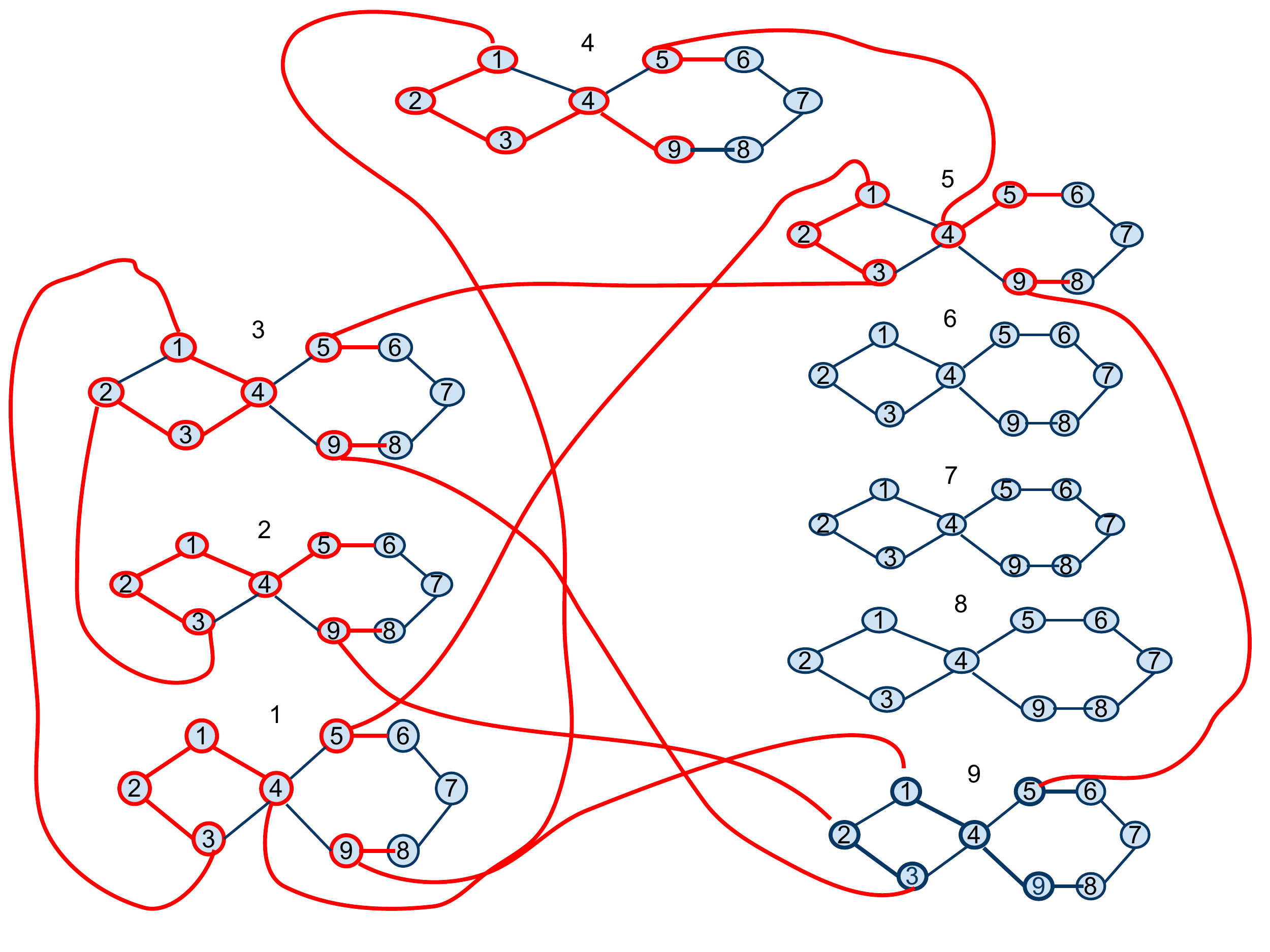}
\caption{$OTIS(BF(4,6))$:The vertices and edges shown in red are saturated vertices and forced edges respectively}
\end{center}
\end{figure}

\noindent{Now, we see in cluster 9, vertex 2, has to have either of $\langle9,1\rangle$ or $\langle9,3\rangle$ as it Hamiltonian edge. Without loss of generality, let $\langle9,3\rangle$ be its Hamiltonian edge. This again forces a set of edges.(Figure 10)}

\begin{figure}[ht]
\begin{center}
\includegraphics[scale=.28]{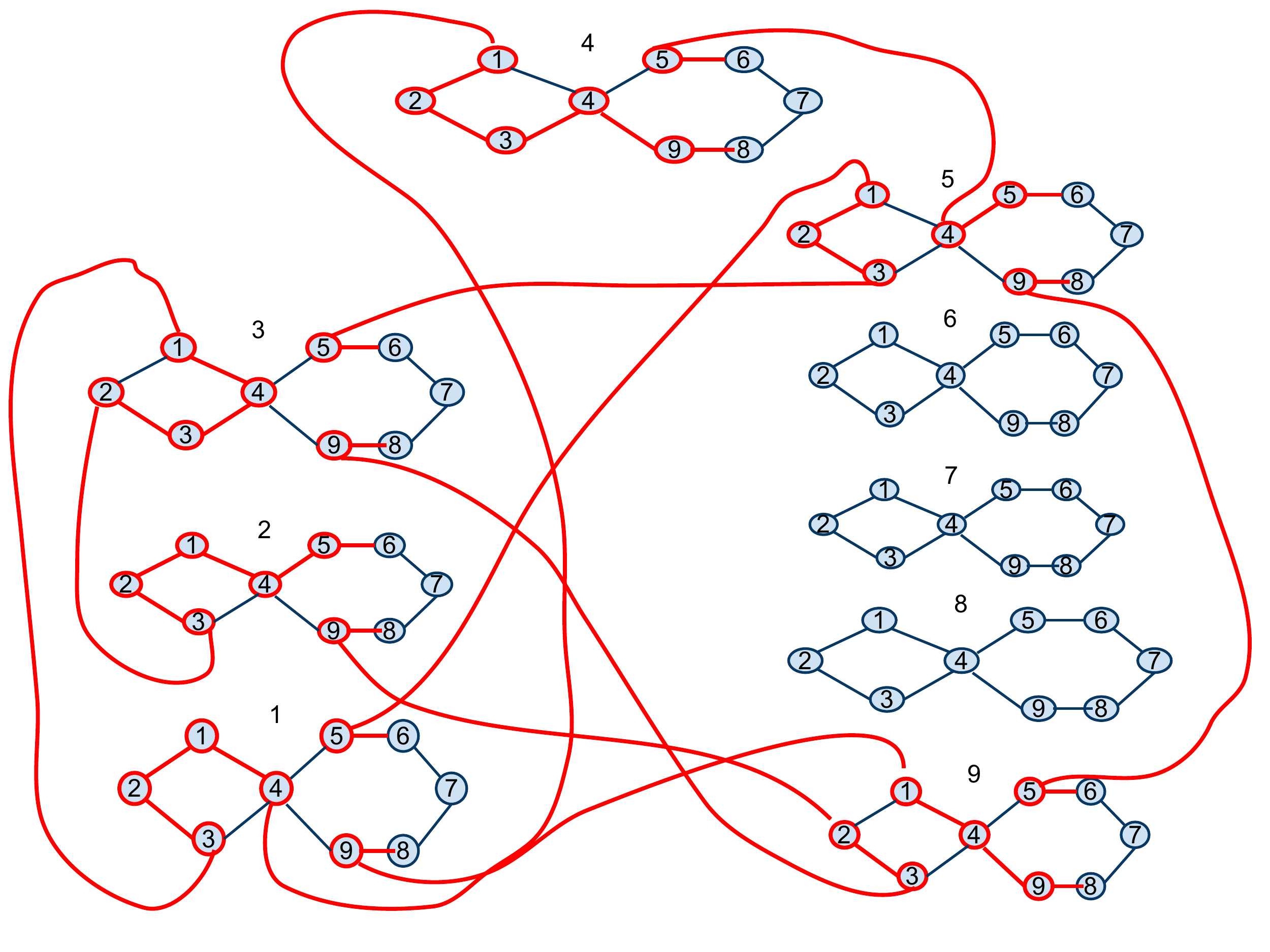}
\caption{($OTIS(BF(4,6))$:The vertices and edges shown in red are saturated vertices and forced edges respectively}
\end{center}
\end{figure}

\noindent{Now, vertex $\langle2,6\rangle$ has to adopt any one of $\langle2,7\rangle, \langle6,2\rangle$ as its Hamiltonian neighbours. Let us study both the cases.}
\begin{description}
 \item [\textbf{Case 1:}]
Let vertex $\langle2,6\rangle$ choose $\langle6,2\rangle$ as its Hamiltonian neighbour. Then the following edges become forced:
\begin{description}
 \item [Cluster 2:] $(7,8)$
 \item[Cluster 8:] $(2,1), (2,3)$
 \item[Intercluster edges:] $(\langle2,7\rangle,\langle$BF(2m+1,2n+1)$ or $BF(2m+1,2k)$7,2\rangle)$
\end{description}

Now, In cluster 6, vertex 2, has to have either of $\langle2,1\rangle$ and $\langle2,3\rangle$ as its Hamiltonian edge. Without loss of generality, let us take $\langle2,1\rangle$ as its Hamiltonian edge. This, in turn
forces the following edges:
\begin{description}
 \item [Cluster 6:]$(3,4)$.
 \item[Intercluster edges:] $(\langle3,6\rangle,\langle6,3\rangle)$ and hence $(\langle3,7\rangle,\langle7,3\rangle)$. This forces the following edges:
 \item[Cluster 3:] $(7,8)$
 \item[Cluster 8:]$(3,2), (3,4)$. Hence $(1,4)$ cannot be an edge in this cluster. This forces the following set of edges:
 \item[Intercluster edges:] $(\langle8,1\rangle, \langle1,8\rangle)$, and hence $(\langle1,7\rangle, \langle7,1\rangle)$ and the following edges:
 \item[Cluster 1:] $(7,6)$ 
 \item[Cluster 6:] $(1,4), (5,6), (8,9)$
 \item[Intercluster edges:] $(\langle5,6\rangle, \langle6,5\rangle)$ and $(\langle6,9\rangle, \langle9,6\rangle)$. This in turn forces the following edges:
 \item[Cluster 4:] $(6,7)$
 \item[Cluster 9:] $(7,8)$
 \item[Intercluster edges:] $(\langle7,9\rangle, \langle9,7\rangle)$
 \item[Cluster 8:] $(4,9)$. Note that, $(9,8)$ is already a forced edge as this edge is incident to a vertex of degree 2. Therefore, this forces the edge $(5,6)$, and the following edge:
 \item[Intercluster edge:] $(\langle5,8\rangle, \langle8,5\rangle)$

\end{description}

\noindent{Now, since both the vertices $\langle5,6\rangle$ and $\langle5,8\rangle$ have become saturated, the edges $(\langle5,6\rangle, \langle5,7\rangle)$ and $(\langle5,8\rangle, \langle5,7\rangle)$ have to be dropped from the Hamiltonian Cycle assuming one exists. 
This leaves vertex $\langle5,7\rangle$ with degree 1 and hence a Hamiltonian Cycle is not possible.} \\
  \item [\textbf{Case 2:}]{ Let vertex $\langle2,6\rangle$ choose $\langle2,7\rangle$ as its Hamiltonian neighbour. Then the edges $(\langle6,2\rangle, \langle6,1\rangle)$ and $(\langle6,2\rangle, \langle6,3\rangle)$ become forced. Now notice that in Cluster 6, both
the vertices $\langle6,1\rangle$ and $\langle6,3\rangle$ cannot have intercluster edges incident on then as Hamiltonian edges or as non-Hamiltonian edges, as both these cases forces subcycle formation. So exactly one of them
has to have the intercluster edge incident on it, as Hamiltonian edge. Without loss of generality, let vertex $\langle6,1\rangle$ has the edge $(\langle6,1\rangle, \langle1,6\rangle)$ as Hamiltonian. Now, this forces the following
edges:}

\begin{description}
 \item [Cluster 6:] $(3,4)$
 \item[Cluster 1:] $(7,8)$
 \item[Intercluster edges:] $(\langle1,7\rangle, \langle7,1\rangle)$
 \item[Cluster 8:] $(1,2), (1,4)$
 \item[Intercluster edges:] $(\langle3,7\rangle, \langle7,3\rangle)$ and $(\langle3,8\rangle, \langle8,3\rangle)$ [In Cluster 3, $(7,8)$ cannot be an edge, as this would force subcycle formation in Cluster 8]
\end{description}

\begin{figure}[ht]
\begin{center}
\includegraphics[scale=.26]{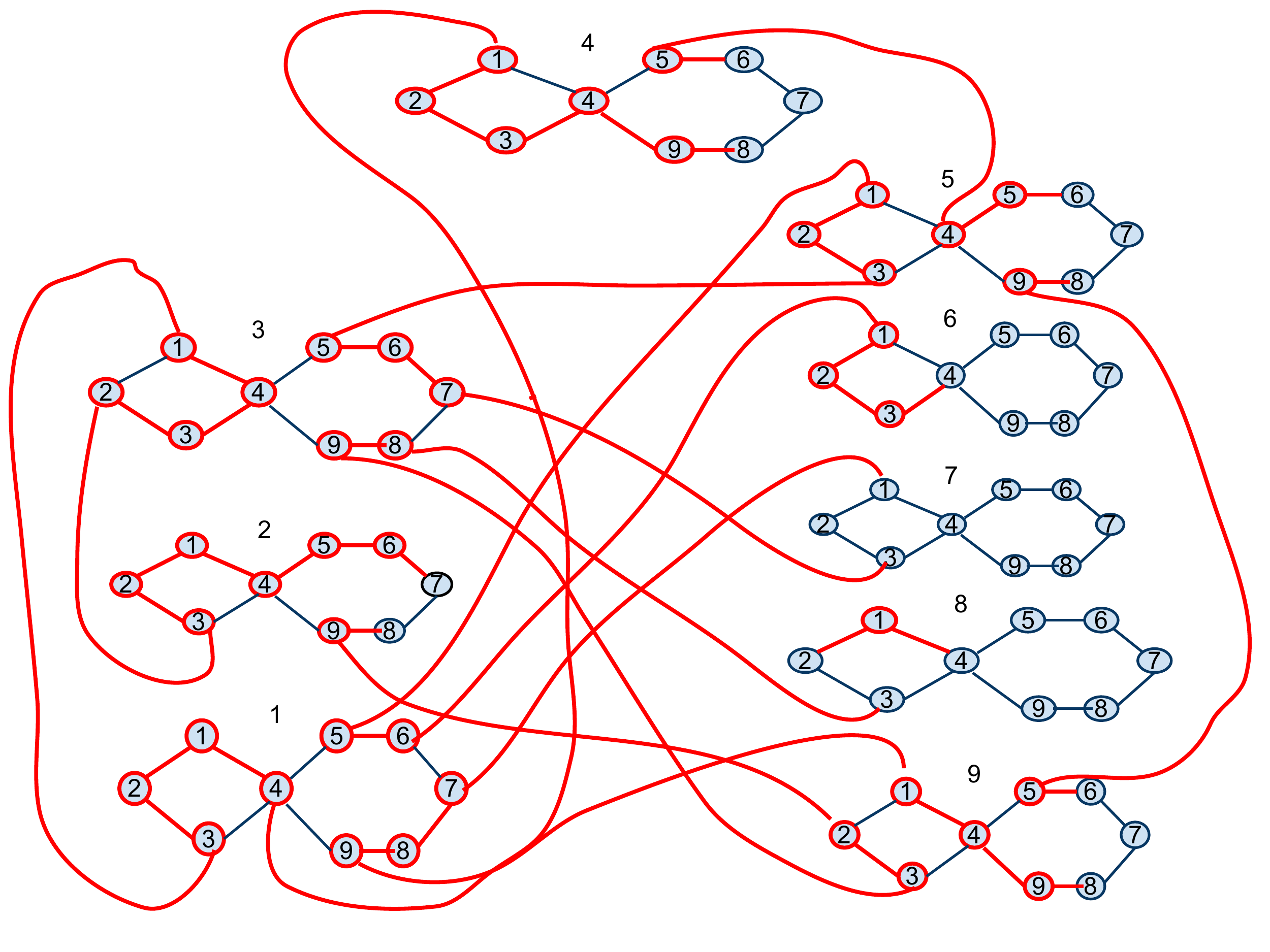}
\caption{$OTIS(BF(4,6))$:The vertices and edges shown in red are saturated vertices and forced edges respectively}
\end{center}
\end{figure}

\noindent{Now vertex $\langle5,7\rangle$ can choose any two of its three incident edges as Hamiltonian edges. e consider all three possible cases and show that Hamiltonian Cycle formation is impossible.}\\
\\
\noindent{\textbf{Case 1: $\langle5,7\rangle$ chooses $\langle5,6\rangle$ and $\langle5,8\rangle$ as its Hamiltonian neighbours.}}\\
{ In this case, the edges $(5,4)$ and $(5,6)$ becomes forced in Cluster 6, 7 and 8. In clusters 6 and 8, this saturates vertex 4. This, in turn forces the intercluster edges
$(\langle6,9\rangle, \langle9,6\rangle)$ and $(\langle8,9\rangle, \langle9,8\rangle)$. Now, since both the vertices $\langle9,6\rangle$ and $\langle9,8\rangle$ have become saturated, the edges $(\langle9,6\rangle, \langle9,7\rangle)$ and $(\langle9,8\rangle, \langle9,7\rangle)$ have to be dropped from the Hamiltonian Cycle assuming one exists. 
This leaves vertex $\langle9,7\rangle$ with degree 1 and hence a Hamiltonian Cycle is not possible.}\\

\noindent{\textbf{Case 2: $\langle5,7\rangle$ chooses $\langle5,6\rangle$ and $\langle7,5\rangle$ as its Hamiltonian neighbours.}}\\
{ In this case, the intercluster edge $(\langle5,8\rangle,\langle8,5\rangle)$ become forced. In cluster 6, edges $(4,5)$ and $(5,6)$  gets forced. This saturates vertex $\langle6,4\rangle$ hence forcing the following edges:
$(\langle4,6\rangle, \langle4,7\rangle)$, $(\langle6,9\rangle, \langle6,8\rangle)$ and $(\langle6,9\rangle, \langle9,6\rangle)$. This, in turn, forces edges $(\langle9,7\rangle, \langle9,8\rangle)$, $(\langle7,9\rangle, \langle9,7\rangle)$ and $(\langle8,4\rangle, \langle8,9\rangle)$.Now, vertex $\langle8,4\rangle$ gets saturated, hence forcing the
following edges in cluster 8: $(2,3)$ and $(5,6)$, and the edge $(7,8)$ in cluster 4. Now, note that $(\langle8,6\rangle, \langle6,8\rangle)$ cannot be a Hamiltonian edge, as it forces subcycle. Therefore, in cluster 8,
$(6,7)$ is an edge. This in turn forces the edge $(9,8)$ in cluster 7. Now in cluster 7, $(4,1)$ and $(4,3)$ becomes forced. 
Now, since both the vertices $\langle7,1\rangle$ and $\langle7,3\rangle$ have become saturated, the edges $(\langle7,2\rangle, \langle7,1\rangle)$ and $(\langle7,2\rangle, \langle7,3\rangle)$ have to be dropped from the Hamiltonian Cycle assuming one exists. 
This leaves vertex $\langle7,2\rangle$ with degree 1 and hence a Hamiltonian Cycle is not possible.}\\

\noindent{\textbf{Case 3: $\langle5,7\rangle$ chooses $\langle5,8\rangle$ and $\langle7,5\rangle$ as its Hamiltonian neighbours.}}\\
{Following similar argument, as above, we identify the following forced edges (Figure 12).}\\

\begin{figure}[ht]
\begin{center}
\includegraphics[scale=.26]{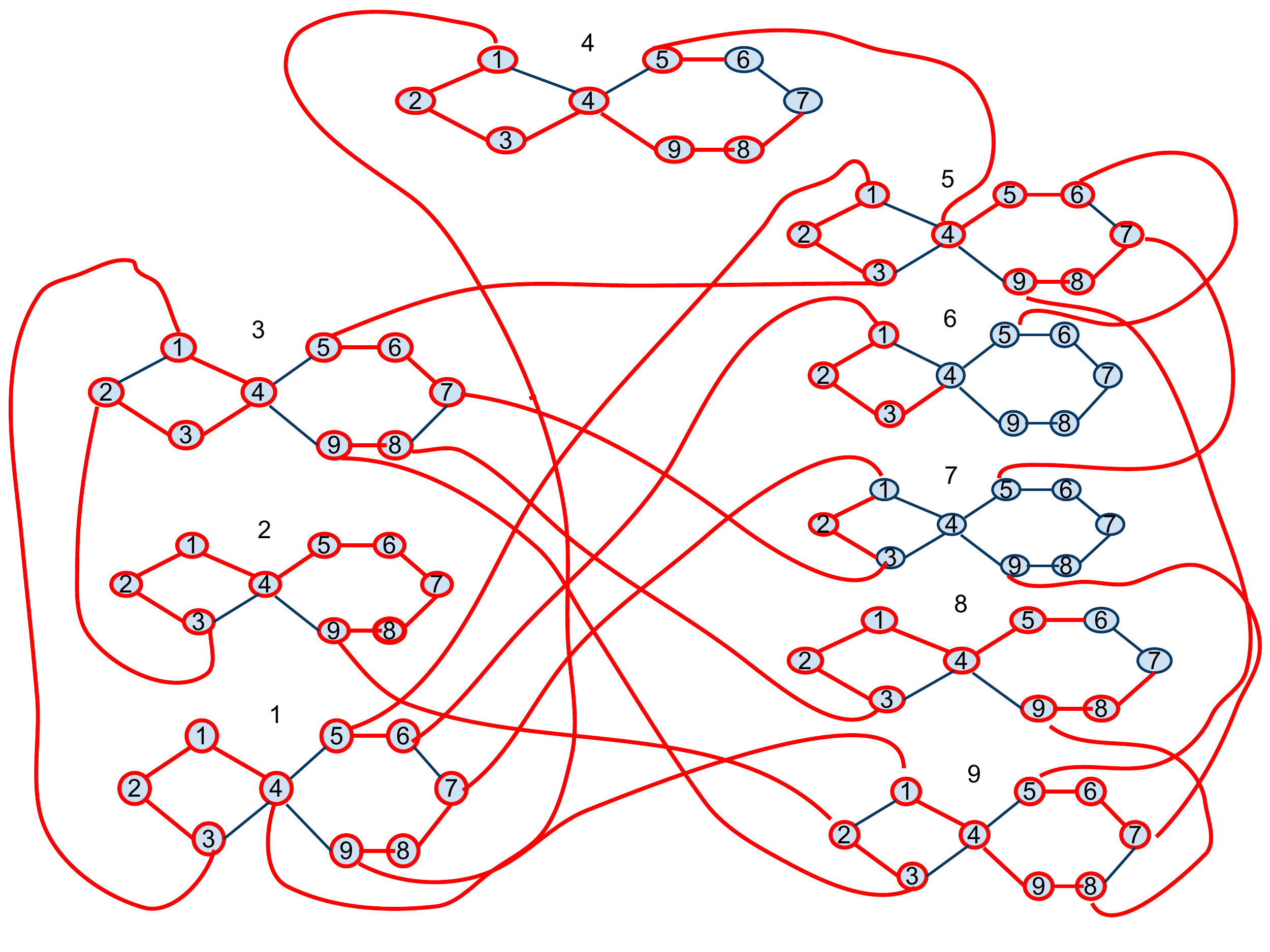}
\caption{$OTIS(BF(4,6))$:The vertices and edges shown in red are saturated vertices and forced edges respectively}
\end{center}
\end{figure}
\noindent{ Now, note that vertex $(\langle7,4\rangle$ cannot have both $\langle7,5\rangle$ and $\langle7,9\rangle$ as its Hamiltonian neighbours as this forces subcycle. Without loss of generality, let $\langle7,5\rangle$ be its Hamiltonian
neighbour, the other Hamiltonian neighbour being $\langle4,7\rangle$. This forces the following edges (figure 13).}  

\begin{figure}[ht]
\begin{center}
\includegraphics[scale=.26]{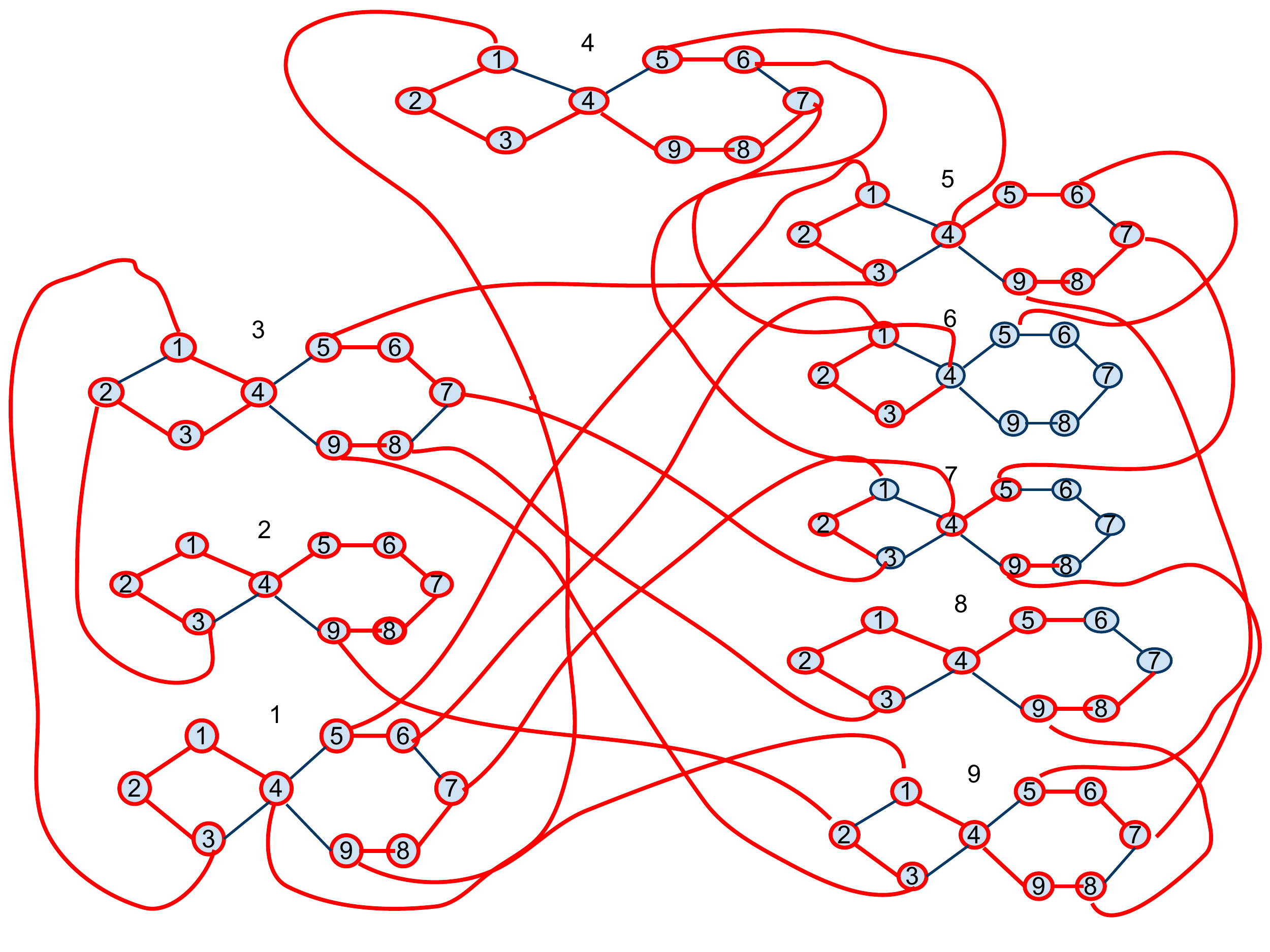}
\caption{$OTIS(BF(4,6))$:The vertices and edges shown in red are saturated vertices and forced edges respectively}
\end{center}
\end{figure}
\noindent{Now, since both the vertices $\langle6,4\rangle$ and $\langle9,6\rangle$ have become saturated, the edges $(\langle6,4\rangle, \langle6,9\rangle)$ and $(\langle6,9\rangle, \langle9,6\rangle)$ have to be dropped from the Hamiltonian Cycle assuming one exists. 
This leaves vertex $\langle6,9\rangle$ with degree 1 and hence a Hamiltonian Cycle is not possible.}

\end{description}
 This completes the proof.

\end {document}